\newcommand\xleftrightarrow[2][]{\ext@arrow 0099{\longleftrightarrowfill@}{#1}{#2}}
\def\longleftrightarrowfill@{\arrowfill@\leftarrow\relbar\rightarrow}
\numberwithin{equation}{section}
\newtheorem{theorem}{Theorem}
\newtheorem{define}{Definition}
\begin{document}

\begin{frontmatter}



\title{Mathematical Modeling of  Soil-Transmitted Helminth Infection: Human-Animal Dynamics with Environmental Reservoirs}

\author[a,b]{Rafiatu Imoro}
\author[b]{Maica Krizna Gavina}
\author[c]{Vachel Gay Paller}
\author[b]{Jomar Rabajante}
\author[b]{Mark Jayson Cortez}
\author[b]{Editha Jose}

\address[a]{Bolgatanga Technical University, Bolgatanga, Upper East Region, Ghana}
\address[b]{Institute of Mathematical Sciences, University of the Philippines Los Ba\~nos, Los Ba\~nos, 4031, Laguna, Philippines}
\address[c]{Institute of Biological Sciences, University of the Philippines Los Baños Los Baños, 4031 Laguna, Philippines}

\begin{abstract}
Soil-transmitted helminth (STH) infections, one of the most prevalent neglected tropical diseases, pose a significant threat to public health in tropical and subtropical areas. These parasites infect humans and animals through direct contact with contaminated soil or accidental ingestion. This study examines the dynamics of STH transmission using a deterministic compartmental model and nonlinear ordinary differential equations. Our model incorporates the roles humans, animals, and the environment play as reservoirs for spreading STH. We derived the basic reproduction number and demonstrate that the disease-free and endemic equilibrium points are asymptotically stable under specific thresholds. We also performed a sensitivity analysis to determine how each parameter affects the model's output. The sensitivity analysis identifies key parameters influencing infection rates, such as ingestion rate, disease progression rate, and shedding rate, all of which increase infection. Conversely, higher clearance and recovery rates decrease infection. The study also highlights the potential for cross-species transmission of STH infections between humans and animals, underscoring the One Health concept, which acknowledges the interdependence of human, animal, and environmental health.
\end{abstract}

\begin{highlights}
\item Developed a mathematical model to show the dynamics of STH transmission with human-animal-environment interactions
\item Ingestion rate, disease progression rate, and shedding rate, increase infection while higher clearance and recovery rates decrease infection
\item The complexity of STH infection dynamics needs multi-faceted approaches to effectively manage and reduce the burden of these infections in both human and animal populations. 
\end{highlights}

\begin{keyword}
Soil transmitted helminths, Mathematical modelling, One Health


\end{keyword}

\end{frontmatter}




\section{Introduction}

Soil-transmitted helminths (STHs) are a group of parasitic worms that spread through contact or accidental ingestion of contaminated soil, making them among the most widespread infections globally. They present a major public health challenge in tropical and subtropical regions. The three most prevalent STH species that infect humans are roundworms (\emph{Ascaris lumbricoides}), whipworms (\emph{Trichuris trichiura}), and hookworms (\emph{Necator americanus} and \emph{Ancylostoma duodenale}). Over two billion individuals worldwide have been impacted by endemic STH infections, with an additional four billion people at risk of acquiring these infections. The highest burden of STH infections is observed in rural areas of Sub-Saharan Africa, Latin America, China, and Southeast Asia. STH infections are typically acquired by ingesting nematode eggs from contaminated soil, such as those of \emph{A. lumbricoides} and \emph{T. trichiura}, or through the penetration of the skin by larvae present in the soil, as seen with hookworm \cite{bib1, bib2, bib3}. 

STH infections represent a major health burden in areas where they are endemic. Although STH infections do not typically result in high mortality rates, they impose a considerable burden of morbidity, particularly on preschool and school-aged children, hindering their physical and cognitive development and ultimately affecting their academic performance. These infections can also contribute to nutritional deficiencies and anemia \cite{bib4}. In addition, the decreased productivity and efficiency observed in adults due to STH infections is a significant concern \cite{bib5}. The economic impact of STH infections is substantial, with billions of dollars spent annually on interventions \cite{bib6}. Morbidity from STH has been estimated to range between 1.97 and 3.3 million disability-adjusted life years (DALYs) \cite{bib7, bib8}. These infections represent a significant health burden, particularly for individuals living in poverty, despite being both preventable and treatable \cite{bib9}. Current strategies to mitigate the effects of STH infections include administering anthelmintic drugs, like albendazole and mebendazole, to high-risk populations in endemic regions through mass drug administration (MDA) programs \cite{bib10}.

STHs are not exclusive to humans; animals such as dogs, cats, and pigs also serve as hosts to various species of roundworms, whipworms, and hookworms. There is suggestive evidence that these animal STHs can be transmitted to and cause patent infections in humans \cite{bib12, bib11, bib13, bib14, bib15}. The increased density of animal populations close to human dwellings raises the chances of human-animal contact, which in turn raises the risk of zoonotic transmission of STH infections. This risk is particularly high when animals are left unattended, not dewormed, and allowed to defecate indiscriminately \cite{bib16}. Humans can acquire zoonotic STHs similar to human STH species, primarily through contact with the infected eggs or larvae of animal STH species. As STH infections are closely linked to soil contamination, understanding disease transmission at the animal-human-environment interface is becoming increasingly important.

Mathematical models that describe the transmission of infectious disease agents fall into two categories, prevalence models and density models. Prevalence models classify individuals into states such as susceptible, latent, infectious, and immune, and describe the number or proportion of individuals in these states. On the other hand, density models consider the number of parasites within a host and describe the average parasite load \cite{bib17}. Researchers have employed both prevalence \cite{bib18, bib19} and density models \cite{bib20, bib29, bib27, bib22, bib23,  bib25, bib24, bib21} to gain insights into the transmission dynamics of STH infections. The earliest mathematical models for helminth infection, which primarily concentrated on the human host, trace back to \cite{bib30, bib31, bib29, bib28}. 

In this paper, we present a multi-host mathematical model using ordinary differential equations (ODEs), a type of deterministic model, to assess the influence of each host on the transmission dynamics of STH and the long-term survival of helminths in the environment. The model aims to shed light on the contribution of each population to the transmission patterns of STH infections. To the best of our knowledge, no model has yet incorporated the role of animals in the transmission dynamics of STH infections.


The rest of the paper is organized as follows: Section \ref{section: model} introduces the model that captures the interactions between animals, humans, and the environment and their contribution to STH transmission. Section \ref{section:paramestimate} details the estimation of crucial parameters and includes a sensitivity analysis to identify key parameters that significantly impact the model’s output, aiding in making informed decisions for eliminating STH transmission. Section \ref{section:simulation} presents numerical simulations and discusses the model’s outcomes. The paper concludes and presents some future studies in Section \ref{section:conclusion}. Full details of derivation of the reproduction number, and proofs of certain theorems can be found in the Appendix.

\section{Model formulation}\label{section: model}
\subsection{Model assumptions}
\indent

We used the SEIS compartmental model to describe the dynamics of STH infections. Our model considers three populations: humans, animals, and parasitic eggs/larvae. The human and animal populations are split into three subgroups: $ S(t)$, $E(t)$, and $I(t)$. Throughout this paper, state variables and parameters with the subscript $h$ correspond to the human population, while those with the subscript $a$ pertain to the animal population.

$S_h(t)$ and $S_a(t)$ represent the number of individuals who are not infected but can be infected with helminth parasites. $E_h(t)$ and $ E_a(t)$ represent those exposed to helminth infection but do not release parasite eggs. $I_h(t)$ and $ I_a(t)$ correspond to the number of individuals infected with parasitic worms.  $M(t)$ represents the number of parasitic eggs/larvae in the environment capable of infecting humans and animals.

We assumed that humans acquire infections from human-specific STH, while animals contract infections from animal-specific STH.
We also assumed that individuals are born with no immunity to helminth parasites and enter the susceptible class with rates $b_h$ and $b_a$. They become exposed to helminth infection at the rates  $\lambda_h$ and $\lambda_a$, after coming into contact with the contaminated environment. Individuals in the exposed class are reinfected as they interact with the contaminated environment but remain latent for a duration of $\frac{1}{\rho_h}$ and  $\frac{1}{\rho_a}$, where $ \rho_h$ and $\rho_a$ are the progression rates from the exposed to the infectious class.

We assumed that an infected person increases the number of parasitic eggs in the soil at the rate $\varepsilon_h$ when they defecate outside the toilets or latrines. 
For simplicity, in our model, individuals in the infected class do not acquire new infections and will recover without treatment at a rate $\gamma_h$ with no period of immunity. Similarly, infected animals increase the number of parasitic eggs in the environment at the rate $\varepsilon_a$ when they roam without restrictions and recover temporally from infection at the rate $\gamma_a$.

Each population can become infected with parasitic eggs/larvae in the soil at rates $ \lambda_h =\frac{\beta_h M}{K + M }$ and $ \lambda_a =\frac{\beta_aM}{K + M }$, where $ \beta_h$ and $ \beta_a$ are the intake rates of eggs from contaminated food or larvae that have penetrated the skin, leading to infection. \textit{K} is the half-saturation constant of helminths within the environment. To simplify our model, we assumed that the number of parasitic eggs ingested is negligible compared to the total parasite population present in the environment. The natural mortality in each human class is denoted by $\mu_h$. 

We assumed that helminth-induced death does not occur in the human population, as STH causes morbidity but does not significantly contribute to mortality in humans. Animals in each class die naturally at a rate $\mu_a$, and infected ones contribute to the death count due to worm infection, at a rate $d_a$. Furthermore, we assumed that parasitic eggs/larvae die naturally at the rate $\mu_m$.
\subsection{Mathematical model}

We employed a compartmental model that categorizes human and animal populations into various states, with transitions between these states occurring at specific rates, as illustrated in Figure \ref{Figure:1}.
\begin{figure}[h]
\includegraphics[width=\textwidth]{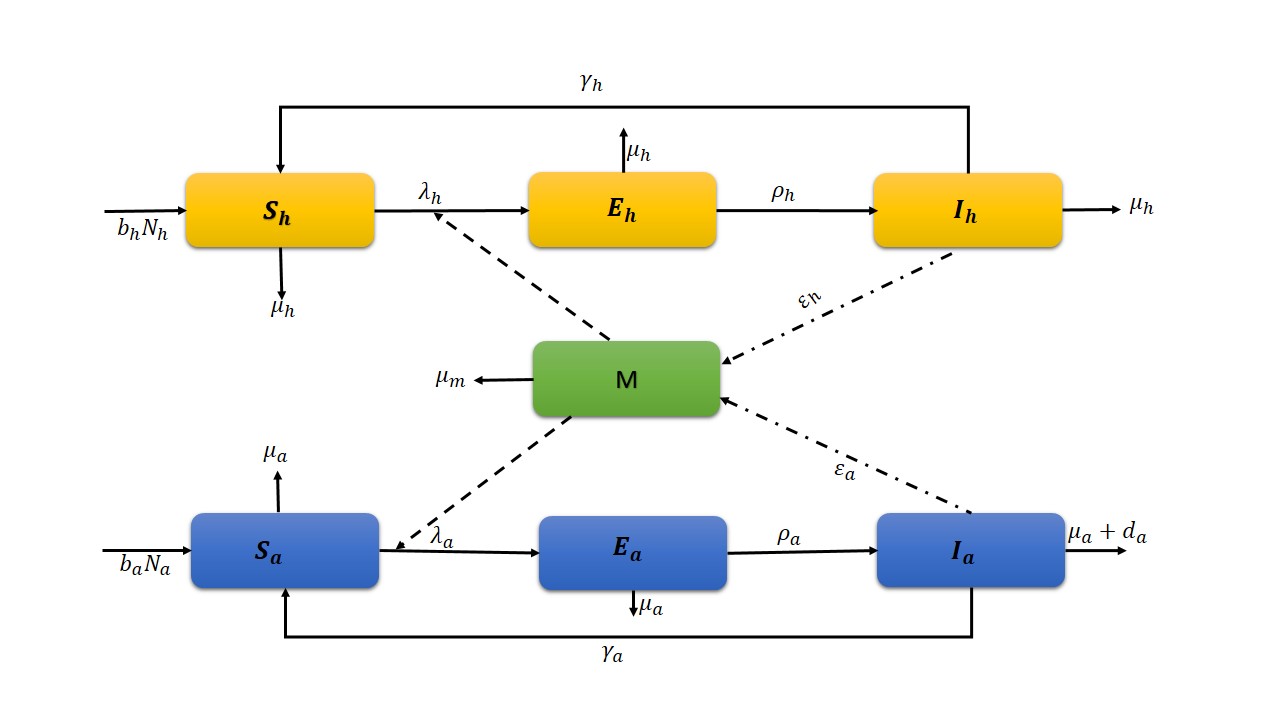}
\caption{\footnotesize{Compartmental diagram for the helminth model. The dashed lines indicate the interaction of the human and animal populations
with the contaminated environment that leads to the ingestion of parasitic eggs. The dotted dashed lines  indicate the release of eggs by individuals or animals
 infected with parasitic helminths, leading to environmental contamination.}}
\label{Figure:1}
\end{figure}


The parameters and their corresponding values used in model simulation are listed in Table \ref{tab:parameters}. The model accounts for seven nonlinear ordinary differential equations that are given by system \eqref{equations}.

\begin{equation}\label{equations}
\begin{cases}
\dfrac{dS_h}{dt}&=b_hN_h+{\gamma }_hI_h-\left({\mu }_h+{\lambda }_h\right)S_h\\~\\
\dfrac{dE_h}{dt}&={\lambda }_hS_h-\left({\mu }_h+{\rho }_h\right)E_h\\~\\
\dfrac{dI_h}{dt}&={\rho }_hE_h-\left({\mu }_h+\gamma_h\right)I_h\\~\\
\dfrac{dS_a}{dt}&=b_aN_a+{\gamma }_aI_a-\left({\mu }_a+{\lambda }_a\right)S_a\\~\\
\dfrac{dE_a}{dt}&={\lambda }_aS_a-\left({\mu }_a+{\rho }_a\right)E_a\\~\\
\dfrac{dI_a}{dt}&={\rho }_aE_a-\left({d_a+\mu }_a+\gamma_a\right)I_a\\~\\
\dfrac{dM}{dt}&={\varepsilon }_hI_h+{\varepsilon }_aI_a-{\mu }_mM
\end{cases}
\end{equation}

\begin{table}[H]
\centering
\caption{{\bf Parameter values (measured per day) for soil-transmitted helminth infection}}
\begin{threeparttable}
\begin{tabular}{p{6cm} p{1.5cm}  p{2cm}  p{4cm}}
\hline
\textbf{Parameter}&\textbf{Symbol} & \textbf{Value}& \textbf{Details} \\ \hline
Per capita birth rate of humans (Philippines) &\textbf{$b_h$} &$0.0000541$& \cite{bib48}\\
Ingestion rate of helminth parasite by humans &\textbf{$\beta_h$} &$0.02022$ & Fitted \\
Recovery rate without treatment of humans&\textbf{$\gamma_h$} &$0.02 $ & \cite{CDC} (based on hookworm study only) \\
Shedding rate of parasites into the environment by humans&\textbf{$\varepsilon_h$} &0.16462 & Fitted \\
Progression rate from exposed to infected class for humans&$\rho_h$ & $0.048$ & \cite{Baron1996}\\
Natural death rate of humans &\textbf{$\mu_h$} & $0.000038$ &\cite{bib49}\\
Ingestion rate of helminth parasites by animals&\textbf{$\beta_a$} &0.08456 & Fitted \\
Recovery rate of animals without treatment &\textbf{$\gamma_a$} & $0.01$ &Estimated to be lower than that of humans ($\gamma_h$)\\
Shedding rate of parasites into the environment by animals&\textbf{$\varepsilon_a$} &0.17661 & Fitted \\
Progression rate from exposed to infected class for animals&$\rho_a$ & 0.071 & \cite{Epe2009}\\
Per capita birth rate of animals &\textbf{$b_a$} & 0.0003 & \cite{New2024}\\
Natural death rate of animals &\textbf{$\mu_a$} & $0.0002$ & \cite{New2024}\\
Death rate caused by helminth on  animals&\textbf{$d_a$} & 0.0006 & Estimated to be thrice the value of $\mu_a$\\
Clearance rate of parasites from the environment &$\mu_m$ & $0.35546$ & Fitted \\ 
\hline
\end{tabular}
\end{threeparttable}
\label{tab:parameters}
\end{table}


\subsection{Solution of the mathematical model}
\vskip .1in
The total size of the human population at any time $ t $ is given by 
\begin{equation}\label{humans}
N_h(t)= S_h(t)+E_h(t)+I_h(t),
\end{equation}
while  that of the animal population is given by 
\begin{equation}\label{animals}
N_a(t)= S_a(t)+E_a(t)+I_a(t).
\end{equation}

The initial conditions are given by:
$S_h(0)>0 $, $E_h(0)\geq 0$, $I_h(0)\geq 0$, $S_a(0)> 0$, $E_a(0)\geq 0$, $I_a(0)\geq 0$, $M(0)\geq 0$. All parameters are assumed to be non-negative over the modeling time frame.

From \eqref{equations}, \eqref{humans} and \eqref{animals}, we have
\begin{align}\label{population}
\begin{cases}
\dfrac{dN_h}{dt}&=\left(b_h-{\mu _h}\right)N_h\\
\dfrac{dN_a}{dt}&=\left(b_a-{\mu _a}\right)N_a- d_aI_a.
\end{cases}
\end{align}

To analyze this model, we introduce the following change of variables.\\
$$ s_h = \dfrac{S_h}{N_h},\; e_h =\dfrac{E_h}{N_h},\;  i _h=\dfrac{I_h}{N_h},\; s_a = \dfrac{S_a}{N_a},\; e_a =\dfrac{E_a}{N_a},\; i _a=\dfrac{I_a}{N_a},\; \text{and}\; m = \dfrac{M}{K}$$\\

Consequently, \eqref{population} can be rewritten in terms of fractional variables as
\begin{align*}
\begin{cases}
\dfrac{dN_h}{dt}&=\left(b_h-{\mu _h}\right)N_h\\
\dfrac{dN_a}{dt}&=\left(b_a-{\mu _a}-d_a i_a\right)N_a.
\end{cases}
\end{align*}
Following the change of variables, model \eqref{equations} can be converted to its fractional form as follows:\\
\begin{equation}\label{fracequations}
\begin{cases}
\dfrac{ds_h}{dt}&=b_h+{\gamma }_h i_h-\left({b_h+\frac{\beta_h m}{1+m }}\right)s_h\\~\\
\dfrac{de_h}{dt}&=\frac{\beta_h m}{1+m}s_h-\left(b_h+{\rho }_h\right)e_h\\~\\
\dfrac{di_h}{dt}&={\rho }_he_h-\left(b_h+\gamma_h\right)i_h\\~\\
\dfrac{ds_a}{dt}&=b_a+{\gamma }_a i_a-\left({b_a+\frac{\beta_a m}{1+m }-d_ai_a}\right)s_a\\~\\
\dfrac{de_a}{dt}&=\frac{\beta_a m}{1+m}s_a-\left(b_a+{\rho }_a-d_ai_a\right)e_a\\~\\
\dfrac{di_a}{dt}&={\rho }_a e_a-\left(b_a+d_a+\gamma_a-d_ai_a\right)i_a\\~\\
\dfrac{dm}{dt}&={\varepsilon }_hi_h+{\varepsilon }_ai_a-\mu_mm
\end{cases}
\end{equation}
\vskip .1in
\noindent The solutions of system \eqref{fracequations} enter the positively invariant region given by
\begin{equation}
\Omega =\left \{(s_h,e_h,i_h,s_a,e_a,i_a,m) \in \mathbb R_+^{7}|s_h+e_h+i_h\leq1,s_a+e_a+i_a\leq  1,0\leq m \leq 1\right\}.
\end{equation}

We now show that the solution of system \eqref{fracequations} is non-negative for all time by proving the following theorem.
\begin{theorem}\label{thm:1}
With non-negative initial conditions, the solution set of system \eqref{fracequations}, 
$(s_h,e_h,i_h,s_a,e_a,i_a,m)$ remains non-negative for all time $t>0$.\\
\end{theorem}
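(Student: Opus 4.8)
The plan is to establish local existence and uniqueness and then show that the boundary of the non-negative orthant $\mathbb R_+^7$ is never crossed. First I would note that the right-hand side of \eqref{fracequations} is $C^1$ (hence locally Lipschitz) on the open set where $m>-1$, which contains $\mathbb R_+^7$; Picard--Lindel\"of then gives a unique solution on a maximal interval $[0,T_{\max})$ with $T_{\max}>0$. It suffices to prove non-negativity on this interval, and global existence for all $t>0$ follows once solutions are known to stay bounded, which the region $\Omega$ guarantees.

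The core computation is the sign of each component of the vector field on the coordinate face where that variable vanishes. Directly from \eqref{fracequations}: on $\{s_h=0\}$, $\tfrac{ds_h}{dt}=b_h+\gamma_h i_h\ge b_h>0$, and likewise $\tfrac{ds_a}{dt}\ge b_a>0$ on $\{s_a=0\}$; on $\{e_h=0\}$, $\tfrac{de_h}{dt}=\tfrac{\beta_h m}{1+m}\,s_h\ge0$, and similarly for $e_a$; on $\{i_h=0\}$, $\tfrac{di_h}{dt}=\rho_h e_h\ge0$, and similarly for $i_a$; on $\{m=0\}$, $\tfrac{dm}{dt}=\varepsilon_h i_h+\varepsilon_a i_a\ge0$ --- in each case the inequality holds provided the remaining variables are non-negative. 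Equivalently, every equation has the form $\dot x_j=g_j-h_j x_j$ with $h_j$ continuous and $g_j\ge0$ on $\mathbb R_+^7$, so the variation-of-parameters identity
$$ x_j(t)=x_j(0)\,e^{-\int_0^t h_j}+\int_0^t g_j(\tau)\,e^{-\int_\tau^t h_j}\,d\tau $$
exhibits $x_j(t)$ as a sum of non-negative terms as long as the whole trajectory has stayed in $\mathbb R_+^7$ up to time $t$.

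The main obstacle is that these dependencies are cyclic --- non-negativity of $s_h$ requires that of $i_h$, which requires that of $e_h$, which requires those of $s_h$ and $m$, with the animal compartments and $m$ tied together the same way --- so the integrating-factor estimates cannot be applied one variable at a time. To close the loop I would argue by a first-exit time $t^{*}=\sup\{t\in[0,T_{\max}):x_i(s)\ge0\text{ for all }i,\ s\in[0,t]\}$ and suppose $t^{*}<T_{\max}$. Continuity forces $x_i(t^{*})\ge0$ for all $i$ while some coordinate $x_j$ vanishes at $t^{*}$ and goes negative just afterward, so $\dot x_j(t^{*})\le0$; for $j\in\{s_h,s_a\}$ this contradicts the strict bound $\dot x_j(t^{*})\ge b_h>0$ (resp.\ $b_a>0$), and for the other coordinates the sign conditions give $\dot x_j(t^{*})=0$, which upon propagating the resulting equalities forces $e_h,i_h,e_a,i_a,m$ all to vanish at $t^{*}$ --- but then the state sits on the invariant disease-free manifold on which $\dot s_h=b_h(1-s_h)$, $\dot s_a=b_a(1-s_a)$ and no coordinate can leave $\mathbb R_+^7$, contradicting the definition of $t^{*}$. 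Alternatively one can perturb the system to $\dot x=f(x)+\epsilon\mathbf{1}$, whose field points strictly into $\mathbb R_+^7$ along $\partial\mathbb R_+^7$, deduce the perturbed solution stays in the open orthant by the elementary hitting-time argument, and let $\epsilon\downarrow 0$ using continuous dependence on compact time intervals. Apart from this coupling issue, the argument is a routine sequence of sign checks.
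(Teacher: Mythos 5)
Your proposal is correct, and at its core it is the same argument the paper uses: the paper's proof drops the non-negative source terms from each equation (e.g.\ $b_h+\gamma_h i_h$ from the $s_h$ equation) and integrates the resulting linear differential inequality to get $x_j(t)\ge x_j(0)\exp\{-\int h_j\}\ge 0$, exactly your variation-of-parameters identity with the $\int_0^t g_j e^{-\int_\tau^t h_j}d\tau$ term discarded. Where you go beyond the paper is in recognizing that these one-variable-at-a-time estimates are circular --- discarding $\gamma_h i_h$ presumes $i_h\ge 0$, discarding $\tfrac{\beta_h m}{1+m}s_h$ presumes $m\ge 0$ and $s_h\ge 0$, and so on --- and in supplying a closing argument (first-exit time, or the $\epsilon$-perturbation of the vector field) that the paper omits entirely. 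That addition is the genuinely valuable part: it upgrades a heuristic sign check into a proof of positive invariance of $\mathbb R_+^7$ under the standard quasi-positivity criterion. One caution on your first closing route: at the exit time $t^*$ the conclusion $\dot x_j(t^*)=0$ for, say, $j=e_h$ only forces $m(t^*)s_h(t^*)=0$, not the vanishing of all infected compartments, and $\dot x_j(t^*)=0$ with $x_j(t^*)=0$ is still compatible with $x_j$ turning negative at second order; so the ``propagate the equalities'' step does not close cleanly as stated. Your alternative --- perturbing to $\dot x=f(x)+\epsilon\mathbf{1}$, where the field points strictly inward on $\partial\mathbb R_+^7$, and letting $\epsilon\downarrow 0$ by continuous dependence --- is the airtight version and is the one to keep.
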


\begin{proof}
From the first equation of \eqref{fracequations}, we have:
$$\frac{ds_h}{dt}=b_h+{\gamma }_h i_h-\left({b_h+\frac{\beta_h m}{1+m }}\right)s_h\geq  -\left({b_h+\frac{\beta_h m}{1+m }}\right)s_h.$$\\
Integrating with respect to $t$ with initial condition $s_h(0) $, we obtain
$$s_h(t)\geq s_h(0)exp\left\{-\int{ \left(b_h+\frac{\beta_h m}{1+m }\right)}\ dt\right\}> 0.$$
The same argument yields
\begin{align*}
e_h(t)\geq & e_h(0)exp\{-\left(b_h+\rho_h\right)t\}\geq 0,\\
i_h(t)\geq & i_h(0)exp\{- \left(b_h+\gamma_h\right)t\}\geq 0\\
s_a(t)\geq & s_a(0)exp\left\{-\int{ \left(b_a+\frac{\beta_a m}{1+m }-d_ai_a\right)}\ dt\right\}> 0,\\
e_a(t)\geq & e_a(0)exp\left\{-\int{ \left(b_a+\rho_a-d_ai_a\right)}\ dt\right\}\geq 0\\
i_a(t)\geq & i_a(0)exp\left\{-\int{ \left(b_a+d_a+\gamma_a-d_ai_a\right)}\ dt\right\}\geq 0\\
m(t)\geq & m(0)exp\left\{-\mu_mt\right\}\geq 0.\\
\end{align*}
Therefore, the solution set  $(s_h,e_h,i_h,s_a,e_a,i_a,m)$ of the model equation \eqref{fracequations} is non-negative for all $t>0$. 
\end{proof}

This shows that the model is well-posed and biologically meaningful, since the subpopulation cannot be negative. Next, we look at the the transmission potential of STH infection by computing the basic reproduction number $R_0$ of system \eqref{equations}, which is the average number of people that can be infected by one person.

\subsection{Existence of disease-free equilibruim}
The disease-free equilibrium (DFE) denoted as $E_0$ represents a state where infection is absent from the population. The DFE can be obtained by getting the roots of the model system. In our case,  setting \eqref{fracequations} to zero, i.e.,
$$\frac{ds_h}{dt}=\frac{de_h}{dt}=\frac{di_h}{dt}=\frac{ds_a}{dt}=\frac{de_a}{dt}=\frac{di_a}{dt}=\frac{dm}{dt}=0,$$ 
and solving the equations simultaneously, we obtain
$$ E_0=(1,0,0,1,0,0,0).$$

In this state, the entire population, both human and animal, is susceptible, as seen in the first and fourth terms of $E_0$. This equilibrium state is stable, i.e., the disease will not spread throughout the population if $R_0<1$. We next present a proof to support this claim and provide a computation of $R_0$.


\subsection{Basic reproduction number}
The basic reproduction number, $R_0$, quantifies the transmission potential of the infection. We use the Next Generation Matrix approach [32], to derive $R_0$. The Next Generation Matrix consists of two components, the new infection matrix $F$ and the transfer matrix $V$. The Jacobian matrices of $F$ and $V$ are derived at the disease-free equilibrium $E_0$. $R_0$ is then determined as the spectral radius of the matrix product $FV^{-1}$. Thus, 

$$R_0=\frac{\beta_a\varepsilon_a\rho_a\theta_h\omega_h+\beta_h\varepsilon_h\rho_h\theta_a\omega_a}{\mu_m\omega_h\theta_h\omega_a\theta_a},$$\\ which accounts for the contributions from both human and animal populations. The detailed derivation is presented in Appendix
 \ref{app:reproduction number}.

Understanding the basic reproduction number is essential for analyzing the stability of $E_0$, particularly when it undergoes perturbations. This analysis highlights the importance of assessing both local and global stability. Local stability examines the system’s response to small disturbances near the equilibrium, determining whether it has the ability to return to its original state. In contrast, global stability considers the system’s behavior across the entire state space, ensuring convergence to the equilibrium regardless of where the system started. By providing insights into whether a disease will persist or be eventually eradicated, stability analysis plays a critical role in crafting effective public health interventions. We perform these stability analyses of the DFE in the next subsection.

\subsection{Local and global stability of the DFE}
The local stability of the DFE is analyzed using the Jacobian matrix and eigenvalue analysis. The results indicate that local asymptotical stability exists for the disease-free equilibrium state, $E_0$ whenever $R_0 < 1$. 

\begin{theorem} \label{thm:local stability}
The disease-free equilibrium state, $E_0$, of the model system \eqref{fracequations} is locally asymptotically stable when $R_0<1$, and unstable for $R_0>1$.
\end{theorem}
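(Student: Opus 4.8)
The plan is to linearize system~\eqref{fracequations} at $E_0=(1,0,0,1,0,0,0)$ and read off the spectrum of the Jacobian $J(E_0)$, exploiting the next-generation structure already used to obtain $R_0$. First I would compute $J(E_0)$ and note that, since $\lambda_h=\lambda_a=0$ at $m=0$, the partial derivatives of $\dot e_h,\dot i_h,\dot e_a,\dot i_a,\dot m$ with respect to $s_h$ and $s_a$ all vanish at $E_0$. Ordering the coordinates as $(s_h,s_a;e_h,i_h,e_a,i_a,m)$, this makes $J(E_0)$ block upper triangular: the leading $2\times 2$ block is $\mathrm{diag}(-b_h,-b_a)$, contributing the eigenvalues $-b_h<0$ and $-b_a<0$, and the trailing $5\times 5$ block, call it $J_1$, acts on the infected/infectious-material coordinates. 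Hence the stability of $E_0$ is decided entirely by $J_1$.

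Writing $\omega_h=b_h+\rho_h$, $\theta_h=b_h+\gamma_h$, $\omega_a=b_a+\rho_a$, $\theta_a=b_a+d_a+\gamma_a$, I would then decompose $J_1=F-V$ with the new-infection matrix $F$ and transfer matrix $V$ of Appendix~\ref{app:reproduction number}, in the ordering $(e_h,i_h,e_a,i_a,m)$. Here $V$ is lower triangular with positive diagonal $(\omega_h,\theta_h,\omega_a,\theta_a,\mu_m)$, hence a nonsingular M-matrix with $V^{-1}\ge 0$, while $F\ge 0$. By the M-matrix characterization underlying the next-generation theorem of~[32] (van den Driessche and Watmough), the spectral abscissa satisfies $s(F-V)<0 \iff \rho(FV^{-1})<1$ and $s(F-V)>0 \iff \rho(FV^{-1})>1$. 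Since $\rho(FV^{-1})=R_0$ by the derivation in Appendix~\ref{app:reproduction number}, $J_1$ is Hurwitz exactly when $R_0<1$ and has an eigenvalue with positive real part when $R_0>1$. Together with the eigenvalues $-b_h,-b_a$, this gives that $E_0$ is locally asymptotically stable for $R_0<1$ and unstable for $R_0>1$.

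Should a self-contained version be wanted, I would instead compute the characteristic polynomial of $J_1$ by eliminating $e_h,i_h,e_a,i_a$ through the two $E\to I\to M$ cycles that share the $m$-node, obtaining $P(x)=(x+\mu_m)(x+\omega_h)(x+\theta_h)(x+\omega_a)(x+\theta_a)-\varepsilon_h\rho_h\beta_h(x+\omega_a)(x+\theta_a)-\varepsilon_a\rho_a\beta_a(x+\omega_h)(x+\theta_h)$, a monic quintic with constant term $P(0)=\mu_m\omega_h\theta_h\omega_a\theta_a(1-R_0)$. For $R_0>1$ one has $P(0)<0$ while $P(x)\to+\infty$, so $P$ has a positive real root and $E_0$ is unstable; for $R_0<1$ one would verify the remaining Routh--Hurwitz inequalities for $P$, which is the part demanding real effort.

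The main obstacle is the direction $R_0<1\Rightarrow$ stability. Along the next-generation route it reduces to checking that the decomposition of Appendix~\ref{app:reproduction number} meets the hypotheses of the van den Driessche--Watmough threshold theorem --- in particular that $m$ is treated as an infectious-material compartment and that the disease-free subsystem $\dot s_h=b_h(1-s_h)$, $\dot s_a=b_a(1-s_a)$ has the locally asymptotically stable equilibrium $(s_h,s_a)=(1,1)$ --- which is routine given $b_h,b_a>0$. Along the direct route the obstacle is the Routh--Hurwitz bookkeeping for $P(x)$, which is heavier but can be shortened using the partial factorization coming from the two host subchains. I would present the next-generation argument for brevity.
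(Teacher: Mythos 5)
Your proposal is correct, but it takes a genuinely different route from the paper's. The paper (Appendix~\ref{app:local stability}) also peels off the eigenvalues $-b_h$ and $-b_a$ and isolates the same $5\times 5$ infected-block, but then expands its characteristic polynomial $\lambda^5+A_4\lambda^4+\cdots+A_0$ explicitly, observes $A_0=\mu_m\omega_a\omega_h\theta_a\theta_h(1-R_0)$, and invokes the Routh--Hurwitz criteria, asserting that the higher-order determinant conditions hold whenever all $A_i>0$. You instead write the block as $F-V$ with $F\ge 0$ and $V$ a lower-triangular nonsingular M-matrix and appeal to the van den Driessche--Watmough equivalence $s(F-V)<0\iff\rho(FV^{-1})<1$, together with the local stability of the disease-free subsystem $\dot s_h=b_h(1-s_h)$, $\dot s_a=b_a(1-s_a)$. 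Your route is cleaner and, arguably, more watertight: the paper's claim that the Routh--Hurwitz inequalities $(ii)$--$(iv)$ follow merely from $A_i>0$ is not justified for a quintic and is the weakest step of the published argument, whereas the M-matrix equivalence handles both directions ($R_0<1$ stability and $R_0>1$ instability) at once; your fallback observation that the constant term $P(0)=\mu_m\omega_h\theta_h\omega_a\theta_a(1-R_0)$ is negative when $R_0>1$ also gives the instability half elementarily. What the paper's computation buys in exchange is the explicit coefficients $A_0,\dots,A_4$, which expose how $R_{0h}$ and $R_{0a}$ enter separately. One cosmetic caution: you have interchanged the roles of $\omega$ and $\theta$ relative to the paper's conventions ($\theta_h=b_h+\rho_h$, $\omega_h=b_h+\gamma_h$ there); since $R_0$ depends only on the products $\omega_h\theta_h$ and $\omega_a\theta_a$ this is harmless, but align the notation before writing it up.
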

The proof of Theorem \ref{thm:local stability} is presented in Appendix \ref{app:local stability}.

On the other hand, to determine the global stability of the disease-free equilibrium point within $ \Omega \subset R^7_+$, we employ the method proposed by Chavez \emph{et al.} \cite{bib34}. This approach divides the system into two subgroups; the infected and the uninfected subpopulation. We apply this method by denoting 
\eqref{fracequations} by
\begin{equation}
\begin{cases}
\frac{dX}{dt}=&F(X,Y).\\
\frac{dY}{dt}=&G(X,Y),\;G(X,0)=0.
\end{cases}
\end{equation}
Here, $X=(s_h,s_a)\in R^2_+$ denotes the uninfected subpopulation and $ Y=(e_h,i_h,e_a,i_a,m)\in R^5_+ $ denotes the infected and infectious subpopulations. The disease-free equilibrium can now be denoted by $E_0=(X^*,0)=(\mathbf{1},\mathbf{0})$, where $X^* = \mathbf{1}\in R^2_+$ and $\mathbf{0}\in R^5_+$. $E_0$ is globally asymptotically stable (GAS) if $R_0<1$ and the following two conditions $(D_1)$ and $(D_2)$ are  satisfied:
\begin{align*}
&D_1:\text{For}\; \frac{dX}{dt}= F(X,0),\;X^*\; \text{is} \;\text{GAS},\\
&D_2:G(X,Y)= AY-\widehat {G}(X,Y),\;\widehat {G}(X,Y)\geq0\;\text{for}\; (X,Y) \in \Omega, 
\end{align*}
where \begin{equation*}
A=D_YG(X^*,0)
\end{equation*}
is the Metzeler-matrix (the off-diagonal elements of $A$ are nonnegative)  and $\Omega$ is the region
where the model makes biological sense.

\begin{theorem} \label{thm:global stability}
The disease-free equilibrium point, $ E_0=(1,0,0,1,0,0,0)\in\Omega $ is GAS for the system \eqref{fracequations} provided that $R_0 < 1$, and conditions $(D_1)$ and $(D_2)$  both hold.
\end{theorem}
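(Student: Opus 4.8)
The plan is to apply the Castillo-Chavez criterion \cite{bib34} recalled above, so the substantive work is to verify conditions $(D_1)$ and $(D_2)$ for system \eqref{fracequations} under the splitting $X=(s_h,s_a)$, $Y=(e_h,i_h,e_a,i_a,m)$, while the hypothesis $R_0<1$ enters only through the fact that it is equivalent to the Metzler matrix $A=D_YG(X^*,0)$ being Hurwitz stable (its spectral abscissa is negative iff $\rho(FV^{-1})=R_0<1$). Once $(D_1)$, $(D_2)$, and $R_0<1$ are all in hand, the conclusion that every trajectory in $\Omega$ converges to $E_0$ is immediate from \cite{bib34}.

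For $(D_1)$ I would set $Y=0$ (hence $m=i_h=i_a=0$) in the $s_h$ and $s_a$ equations of \eqref{fracequations}; these decouple into the two scalar linear ODEs $\dot s_h=b_h(1-s_h)$ and $\dot s_a=b_a(1-s_a)$, each with $s=1$ as a globally attracting equilibrium (explicitly $s_h(t)=1+(s_h(0)-1)e^{-b_h t}$, and likewise for $s_a$). Hence $X^*=(1,1)$ is GAS for $\dot X=F(X,0)$, which establishes $(D_1)$.

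For $(D_2)$ I would linearize the five infected-compartment equations at $E_0$, using $\tfrac{\beta m}{1+m}=\beta m+O(m^2)$ near $m=0$, to obtain $A$; with the ordering $(e_h,i_h,e_a,i_a,m)$ its off-diagonal entries are $\rho_h,\rho_a,\varepsilon_h,\varepsilon_a,\beta_h,\beta_a\ge 0$, so $A$ is Metzler. Then I would form $\widehat G=AY-G$ componentwise: the $i_h$- and $m$-components vanish identically; the $e_h$-component equals $\beta_h m\bigl(1-\tfrac{s_h}{1+m}\bigr)$, which is $\ge 0$ on $\Omega$ since $s_h\le 1\le 1+m$ there; and the $e_a$-component contains the analogous nonnegative piece $\beta_a m\bigl(1-\tfrac{s_a}{1+m}\bigr)$. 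The only problematic pieces, arising because $N_a$ is not conserved, are $-d_a i_a e_a$ in the $e_a$-component and $-d_a i_a^2$ in the $i_a$-component.

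The main obstacle is precisely the disease-induced death rate $d_a$: since $\dot N_a$ carries the term $-d_a i_a N_a$, passing to fractions injects the negative contributions $-d_a i_a e_a$ and $-d_a i_a^2$ into $\widehat G$, so $\widehat G\ge 0$ does not hold on all of $\Omega$ unconditionally — which is why $(D_2)$ is carried as a standing hypothesis in the statement. To make the result self-contained one would either restrict to a forward-invariant subregion of $\Omega$ on which the positive term $\beta_a m(1-s_a/(1+m))$ dominates $d_a i_a e_a$ while handling the $i_a$-component by a separate comparison argument on $i_a$, or treat $d_a$ as negligible, recovering the standard no-disease-death situation where $\widehat G\ge 0$ is immediate. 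With $(D_1)$, $(D_2)$ verified and $R_0<1$, the Castillo-Chavez theorem then yields that $E_0$ is globally asymptotically stable in $\Omega$.
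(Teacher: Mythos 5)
Your proposal is correct and follows essentially the same route as the paper's proof in Appendix C: the same splitting $X=(s_h,s_a)$, $Y=(e_h,i_h,e_a,i_a,m)$, the same explicit solution of $\dot s_h=b_h(1-s_h)$ and $\dot s_a=b_a(1-s_a)$ for $(D_1)$, and the same computation of $\widehat G=AY-G$, correctly isolating the terms $-d_a i_a e_a$ and $-d_a i_a^2$ that make $\widehat G\geq 0$ fail unless $d_a=0$ — which is exactly the caveat the paper itself records. Your additional observation about how one might repair $(D_2)$ goes slightly beyond the paper, but the core argument is the same.
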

The proof of the preceding theorem can be found in Appendix \ref{app:global stability}.

Aside from the determination and analysis of the disease-free equilibrium, it is also important to determine whether the infection will persist in the population. To assess this, we establish the existence and find an expression of the endemic equilibrium, the state where the disease remains constantly present.

\subsection{Existence of endemic equilibruim}
The endemic equilibrium state refers to the equilibrium state where the disease persists in the population. In this state, all diseased states are considered to be positive, and as a result, $i_h$ and $i_a$ must be greater than zero for all other states to be positive.

If we let $\lambda_h^*=\dfrac{\beta_hm^*}{1+m^*}$, $\lambda_a^*=\dfrac{\beta_am^*}{1+m^*}$,  and $z^*_a=b_a-d_ai_a$  and equating system \eqref{fracequations} to
zero, the endemic equilibrium $(s_h^*,e_h^*,i_h^*,s_a^*,e_a^*,i_a^*,m^*)$ can be computed as\\

$ s_h^*=\dfrac{(b_h+\rho_h)(b_h+\gamma_h)}{(b_h)^2+(\Delta_{2h}+\lambda_h^*)b_h+\lambda_h^*\Delta_{2h}+\rho_h\Delta_{1h}}$\\

$ e_h^*=\dfrac{\lambda_h^*(b_h+\gamma_h)}{(b_h)^2+(\Delta_{2h}+\lambda_h^*)b_h+\lambda_h^*\Delta_{2h}+\rho_h\Delta_{1h}}$\\

$ i_h^*=\dfrac{\rho_h\lambda_h^*}{(b_h)^2+(\Delta_{2h}+\lambda_h^*)b_h+\lambda_h^*\Delta_{2h}+\rho_h\Delta_{1h}}$\\

$ s_a^*=\dfrac{b_a(\rho_a+z_a^*)(\Delta_{1a}+z_a^*)}{(z_a^*)^3+(\Delta_{2a}+\lambda_a^*)(z_a^*)^2+(\lambda_a^*\Delta_{2a}+\rho_a\Delta_{1a})z_a^*+\rho_a\lambda_a^*d_a}$\\

$ e_a^*=\dfrac{b_a\lambda_a^*(\Delta_{1a}+z_a^*)}{(z_a^*)^3+(\Delta_{2a}+\lambda_a^*)(z_a^*)^2+(\lambda_a^*\Delta_{2a}+\rho_a\Delta_{1a})z_a^*+\rho_a\lambda_a^*d_a}$\\

$ i_a^*=\dfrac{b_a\rho_a\lambda_a^*}{(z_a^*)^3+(\Delta_{2a}+\lambda_a^*)(z_a^*)^2+(\lambda_a^*\Delta_{2a}+\rho_a\Delta_{1a})z_a^*+\rho_a\lambda_a^*d_a}$

\begin{align*}
m^*&=\frac{b_a\rho_a\epsilon_a\lambda_a^*}{\mu_m\left((z_a^*)^3+(\Delta_{2a}+\lambda_a^*)(z_a^*)^2+(\lambda_a^*\Delta_{2a}+\rho_a\Delta_{1a})z_a^*+\rho_a\lambda_a^*d_a\right)}\\
 &+\frac{\rho_h\epsilon_h\lambda_h^*}{\mu_m\left({(b_h)^2+(\Delta_{2h}+\lambda_h^*)b_h+\lambda_h^*\Delta_{2h}+\rho_h\Delta_{1h}}\right)}.
\end{align*}

where $\Delta_{2a}=\Delta_{1a}+\rho_a , \Delta_{1a}=d_a+\gamma_a$,
$\Delta_{2h}=\gamma_h+\rho_h , \Delta_{1h}=\gamma_h.$

At the endemic equilibrium, the disease remains present in the population, and since all the parameters used are positive, it follows that all the computed state variables are greater than zero. Having computed the endemic equilibrium, it is essential to explore the stability and behavior of the endemic equilibrium. This brings us to the concept of bifurcation analysis, which allows us to investigate the critical points where the system transitions between different equilibrium states.

\subsection{Bifurcation analysis}
The bifurcation behavior of the model is examined using the center manifold theory proposed by Castillo-Chavez and Song [35]. In this theory, values $q$ and $p$ are computed; $q$ indicates the presence of a bifurcation and $p$ defines the nature of the bifurcation. The values of $q$ and $p$ are computed from the following expressions, details of which are explained in Appendix \ref{app:bifurcation}.

\begin{align*}
q=&\sum_{k,i=1}^7v_kw_i\frac{\partial^2f_k}{\partial x_i\partial \beta^*}(E_0,\beta^*).\\
p=&\sum_{k,i,j=1}^7v_kw_iw_j\frac{\partial^2f_k}{\partial x_i\partial x_j}(E_0,\beta^*),
\end{align*}
where $E_0$ is the disease-free equilibrium and $\beta^*$ is the bifurcation parameter. 

The analysis demonstrates that the coefficient $q$ is positive $(q>0)$  affirming its contribution to bifurcation dynamics.
The bifurcation type is determined by the sign of $p$.
If $p < 0$ then system \eqref{fracequations} will 
exhibit forward bifurcation; if $p > 0$, it undergoes backward bifurcation.
In forward bifurcation, the disease-free equilibrium remains stable for  $R_0<1$, while an endemic equilibrium emerges and remains stable only when $R_0>1$. Reducing $R_0 $ below 1 leads to the natural eradication of the infection. As a result, standard control measures such as vaccination and treatment, which lower $R_0 $ below 1, are effective in eliminating the disease.
Backward bifurcation, on the other hand, arises when a stable endemic equilibrium coexists with a stable disease-free equilibrium even when $R_0<1$. This implies that reducing $R_0 $ below 1 may not be sufficient to eradicate the disease, as the infection can persist. In such scenarios, additional interventions, such as increasing vaccination coverage, are necessary to achieve disease elimination [36].

Further details of the bifurcation analysis are available in Appendix \ref{app:bifurcation}.

\section{Parameter estimation and sensitivity analysis}\label{section:paramestimate}

\subsection{Parameter estimation}

Parameter estimation ensures that the model accurately represents the observed data. In this study, we performed parameter estimation using STH prevalence data to estimate our model's sensitive parameters. The process aimed to minimize a measure of error between the model output and the observed prevalence data in humans and animals.

The secondary data used for parameter estimation in this study was sourced from a survey done in two provinces in Mindanao, Philippines. The study \cite{bib47}  collected information on the zoonotic transmission of intestinal parasites in these locations. This data served as the benchmark for evaluating the accuracy of the model.
The parameters fitted in this study include:\\
\noindent
$\beta_h:$ Ingestion rate of helminth parasite by humans\\
$\varepsilon_h:$ Shedding rate of parasites into the environment by humans\\
$\beta_a:$ Ingestion rate of helminth parasite by animals\\
$\varepsilon_a:$ Shedding rate of parasites into the environment by animals\\
$\mu_m:$ Clearance rate of eggs from the environment

These parameters are critical as they influence the model's dynamics and ability to replicate real-world observations. We initiated the parameter estimation process with a grid search over the parameter space, employing the Latin Hypercube Sampling (LHS) method [37]. This ensures that the entire range of parameter values is efficiently explored by dividing the space into equally probable intervals. Once the parameters are sampled, a set of parameter combinations is evaluated, and the combination that results in the minimum error is selected. Numerical optimization is then applied to refine the parameter estimates further. This optimization process carried out using a Python intrinsic function, minimizes the error by adjusting the parameter values iteratively, improving the accuracy of the model's predictions.
The Python code used for the parameter estimation is publicly available on GitHub at https://github.com/imoro1984/STH.

\subsection {Sensitivity analysis}
Sensitivity analysis is essential for identifying the most effective strategies to mitigate the impact of soil-transmitted helminths (STH). It is often employed to evaluate the robustness of model predictions against variations in parameter values. Generally, the spread of the disease is directly linked to the basic reproduction number.  We therefore identified which parameters significantly influence the basic reproduction number, guiding us to prioritize them in identifying intervention strategies. 

Our method included both local and global computation of sensitivity indices for different parameters. While global sensitivity evaluates the influence of differences in all parameters across their entire range of possible values, local sensitivity looks at how small, incremental changes in a single parameter affect the model's output.

\subsubsection{Local sensitivity analysis}
In this section, we performed a sensitivity analysis on the basic reproduction number to evaluate the influence of each parameter on its value. Subsequently, we utilized Chitnis \emph{et al.’s} [38] method to compute the forward sensitivity index of the basic reproduction number to various parameters.  This index is defined as the ratio of the relative change in the variable to the relative change in the parameter. 
\newtheorem{definition}{Definition}
\begin{define}\cite{bib32} The normalized forward sensitivity index of a differentiable variable $u$ that depends  on parameter $\epsilon$  can be expressed as: $ r_\epsilon^u=\left(\dfrac{\partial u}{\partial\epsilon}\right)\cdot\left(\dfrac{\epsilon}{u}\right)$.
\end{define}

For example, the sensitivity index of the basic reproduction number $R_0$ to parameter  $\beta_h$ is
 given as $r_{\beta_h}^{R_0}=\left(\dfrac{\partial {R_0}}{\partial\beta_h}\right)\cdot\left(\dfrac{\beta_h}{R_0}\right)=0.05495$. The other indices were obtained similarly. Figure \ref{fig:localsen} shows the sensitivity index of the model parameters to $R_0$.

\begin{figure}[h]
\centering
    \includegraphics[width=0.8\textwidth]{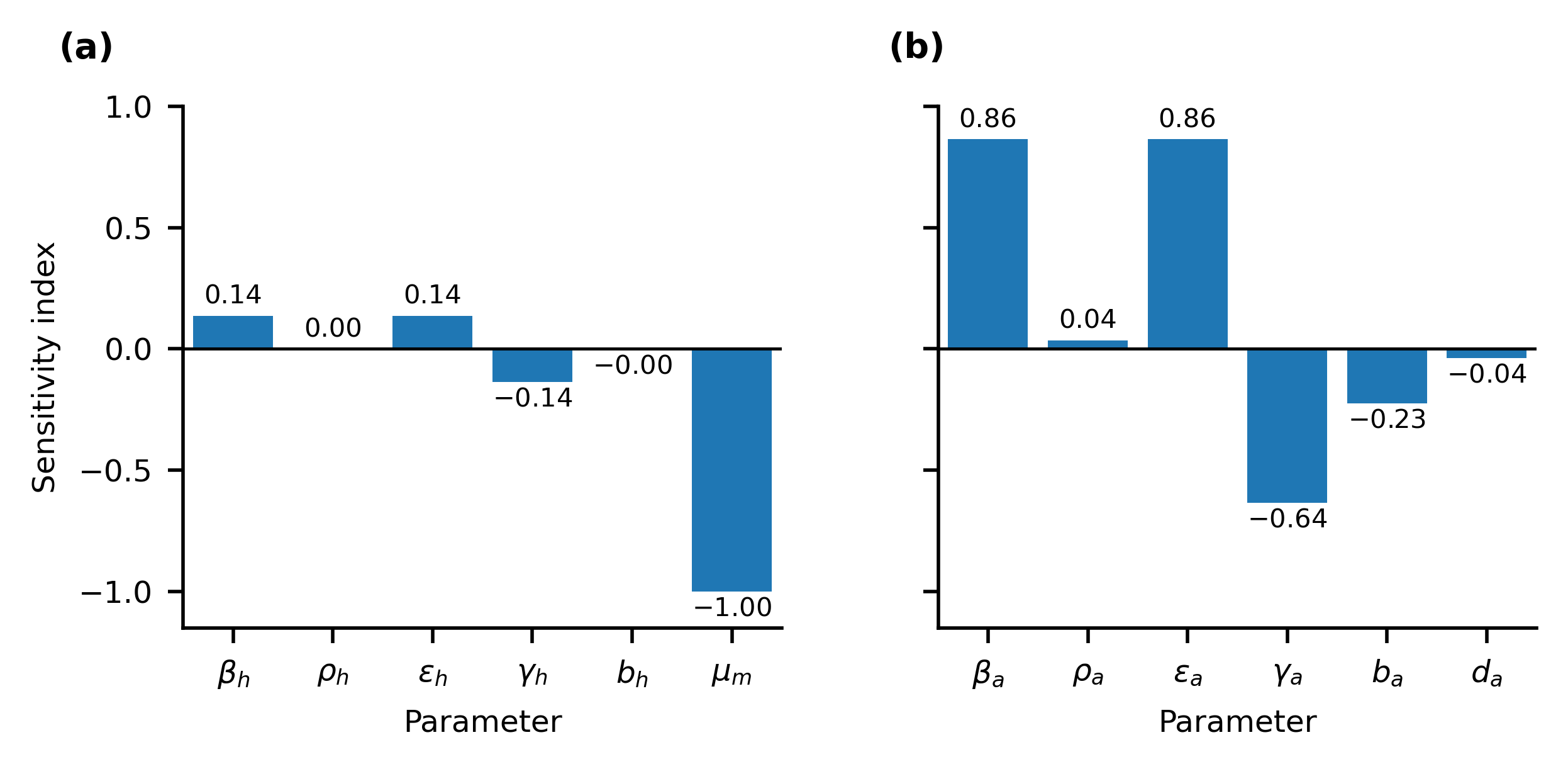}
\caption{Local sensitivity analysis of model parameters to $R_0$.  (a) Parameters associated with the human and parasitic egg population (b) Parameters associated with the animal population. Positive values (e.g., $\beta_a$, $\varepsilon_a$) suggest that increasing these parameters raises $R_0$, while negative values (e.g., $\mu_m$, $\gamma_a$, $b_a$) indicate that increasing these parameters reduces $R_0$ subsequently leading to a decrease in the infection.}
\label{fig:localsen}
\end{figure}

Parameters $\beta_h$, $\varepsilon_h$, $\beta_a$, and $\varepsilon_a$ exhibited positive sensitivity indices. This signifies that increasing the ingestion rates ($\beta_h$ \text{and} $\beta_a$), and the shedding rates ($\varepsilon_h$ \text{and} $\varepsilon_a$), while keeping other parameters constant will raise the disease's endemicity. 
A high sensitivity index indicates that the parameter significantly influences the disease's severity, while a low index reflects a minimal impact. This highlights that the ingestion rate of animals  ($\beta_a$) and their shedding rate ($\varepsilon_a $) have a greater effect on disease severity than the progression rate of humans ($\rho_h$).
 Conversely, the parameters,  $\mu_m$, $\gamma_h$, $\gamma_a$, $b_h$, $b_a$, and $d_a$  have negative sensitivity indices with the clearance rate of eggs from the environment ($\mu_m$) being the most influential and $b_h$ being the least. Increasing the values of $\mu_m$,  $\gamma_a$,  and $b_a$, while maintaining others, lowers the basic reproduction number, thus reducing the disease's endemicity. It is evident that parameters associated with the animal population significantly impact the disease dynamics, likely due to the high estimated prevalence rate of 56\% among animals, compared to an estimated prevalence of 19.6\% in humans. The significant role of animal-related parameters in the disease dynamics underscores the importance of considering the animal population in controlling STH infections in humans. This suggests that animals should be considered when developing control strategies for STHs. Additionally, promoting personal hygiene and maintaining environmental cleanliness are crucial for effective STH control.

\subsubsection{Global sensitivity analysis}

In this subsection, we conducted global sensitivity analyses on the 12 parameters in Table \ref{tab:parameters}. The goal was to assess each parameter's influence on the model's outcomes and quantify how input changes affect the results. This study focused on three key groups: infected humans, infected animals, and the number of parasitic eggs in the environment. To achieve this, we used a global sensitivity analysis method called partial rank correlation coefficient (PRCC) analysis, which is  well-known for its effectiveness and efficiency in sampling-based approaches. 

We first assigned a uniform distribution to each parameter, and applied Latin hypercube sampling (LHS)  [39] to generate input parameter values, which were utilized to conduct 5,000 simulations. The maximum and minimum values of the parameters were set at $ \pm90\%$  of the default values listed in Table 	\ref{tab:parameters}. 

PRCC values ranging from -1 to 1 were calculated at different time points using the MATLAB function partialcorr. Figure \ref{fig:senanalysis} (a)-(c) illustrates these PRCC values, where each bar represents a PRCC value at a specific instance. The bars depict how the sensitivity of the parameters evolves over time in relation to the studied population. A positive PRCC value indicates a linear relationship while a negative PRCC value means an inverse relationship. Also, a high absolute PRCC value suggests a strong correlation of the parameter with the model outcome, meaning a small change in that parameter could significantly impact the model's dynamics. 

\begin{figure}[h]
\includegraphics[width=0.8\textwidth]{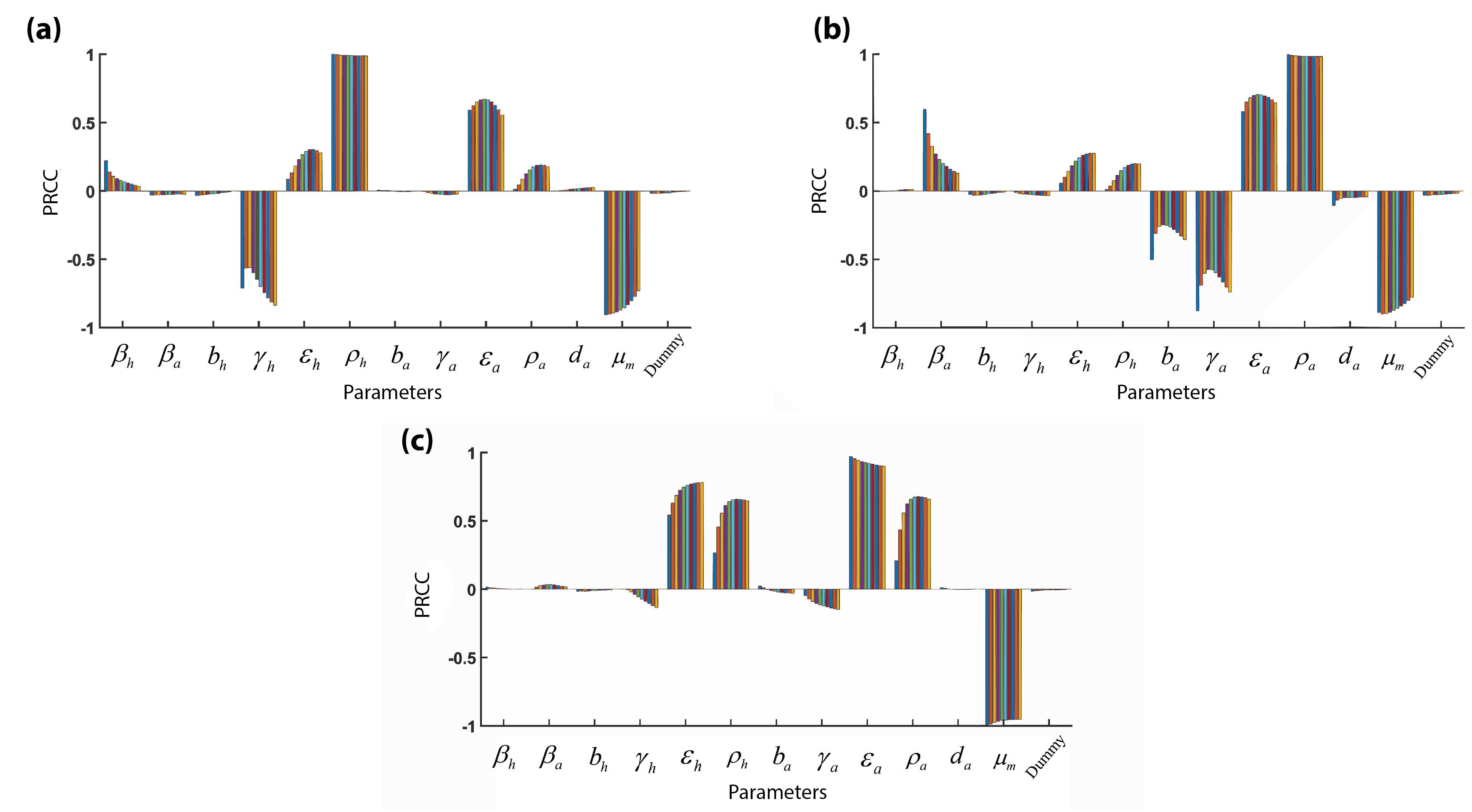}
\caption{PRCC values depicting the sensitivity of the model's output over time for (a) Infected humans (b) Infected animals and  (c) Parasitic egg population to the model parameters. Higher PRCC values signify a stronger correlation between a parameter and the population analyzed. The ingestion rates $\beta_a,\beta_h$, shedding rates $\varepsilon_a,\varepsilon_h$, and  progression rates $ \rho_a,\rho_h$ exhibit positive PRCC values whilst the recovery rates $\gamma_a, \gamma_h$ and clearance rate $\mu_m$ show negative PRCC values. }
\label{fig:senanalysis}
\end{figure}

PRCC values of the parameters to the infected human population are illustrated in Figure  \ref{fig:senanalysis} (a).  Parameters such as $\beta_h,\rho_h,\varepsilon_h,\rho_a,$ and $\varepsilon_a$ are found to have positive PRCC values indicating that an increase in the values of these parameters will increase the infected human population. $\rho_h$ and $\varepsilon_a$ shows a strong correlation while $ \beta_h$ is the least correlated to the infected human population. Conversely, parameters $\gamma_h$ and $\mu_m$ have negative PRCC values indicating that an increase in these values will result in a decrease in the infected human population. 

Parameters  $\beta_a$, $\varepsilon_a$ and $\rho_a$ were found to have high positive PRCC values, suggesting that an increase in these parameters will lead to a rise in the number of infected animals. On the other hand, parameters $b_a,\gamma_a$, and $\mu_m$  have negative PRCC values indicating that an increase in their values will reduce the number of infected animals, as shown in Figure \ref{fig:senanalysis} (b).

Meanwhile, parameters $\rho_h,\rho_a,\varepsilon_h$, and $\varepsilon_a$   have high positive PRCC values, to the parasitic egg population while parameter $\mu_m$ have high negative PRCC value. The corresponding values are presented in Figure \ref{fig:senanalysis} (c).  

It was noted that the PRCC indices of $\beta_h$ and $\beta_a$ exhibit a decreasing trend, and by the time equilibrium is reached, their effect becomes insignificant.
A dummy parameter was included to ensure the robustness of the analysis. Since it exhibited low sensitivity, our conclusions are deemed reliable. The global sensitivity analysis confirmed the sign direction (positive or negative) of the effect of parameters on the model as identified by local sensitivity analysis. However, the impact of parameters on the model is more pronounced in the global sensitivity analysis than in the local one. This is because global sensitivity analysis assesses the influence of parameters across the entire parameter space, considering interactions between parameters whilst local sensitivity analysis typically examines changes around a specific point in the parameter space, often the default or estimated parameter values.


\section{Numerical simulation and discussion}\label{section:simulation}
For our numerical simulation, we utilized the parameter values provided in Table \ref{tab:parameters}. We observed the population distribution profiles for humans, animals, and parasitic eggs in the contaminated environment. We made the following observations. There was a significant decrease in the susceptible population during the first $11$ days, which led to an increase in the number of exposed individuals and a rise in the number of infected individuals. Around the ninth day, the number of exposed individuals started to decline after reaching a peak of 0.6, while the infected human population continued to increase. The susceptible individuals eventually stabilized. A similar trend was observed in the animal population, with a peak of 0.5 also occurring on the ninth day. These dynamics are illustrated in Figure \ref{fig:profile_distributionsh} (a)-(c) and Figure \ref{fig:profile_distributionsa} (a)-(c). 

Certain key parameters (ingestion, shedding, and clearance rates) were modified while keeping all others at their baseline values, allowing for an analysis of their impact on infection dynamics across the different populations, leading to the following insights. Increasing $\varepsilon_a$,  and $\varepsilon_h$ causes a significant increase in the infected individuals $i_h$ and $i_a$  and the parasite population as demonstrated in Figures \ref{fig:profile_distributions} (a)-(c). On the other hand, increasing $\mu_m$  leads to a decrease in the three populations. This is illustrated in Figure \ref{fig:profile_distributions} (a) - (c).
It can also be observed that  $\varepsilon_a$, and $\varepsilon_h$ influence both the infected human and animal populations. However, the parameters associated with the animal population have a more pronounced effect on the human population compared to the impact of human parameters on the infected animal population. These trends are illustrated in Figure \ref{fig:profile_distributions} (a) and (b).
 An increase in $\beta_a$, and $\beta_h$ has little effect on the three populations, as illustrated in Figure \ref{fig:profile_distributions} (a)-(c).

\begin{figure}[H]
\centering
        \includegraphics[width=0.8\textwidth]{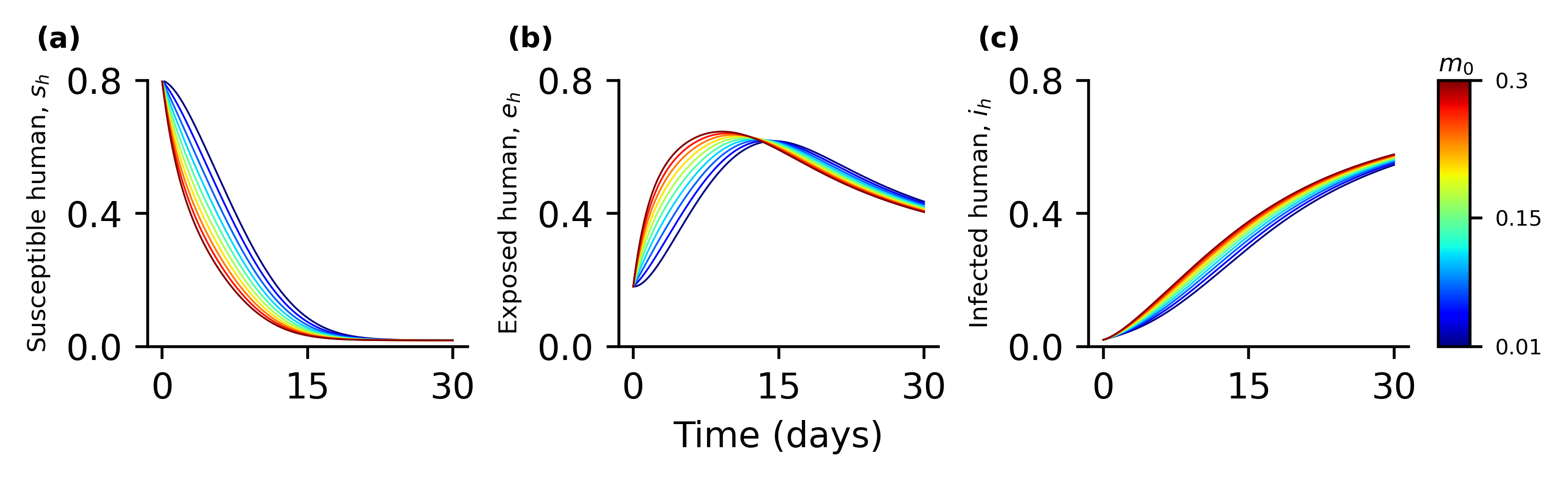}
    
\caption{Profile distribution of the human population  (a) Susceptible, (b) Exposed, and (c) Infected for varying initial parasitic egg populations in the contaminated environment $m_0$. A rapid decline in the susceptible population and an increase in the exposed and infected populations are observed. As $m_0$ increases, the decline in the susceptible population occurs over a longer period, and fewer individuals are exposed.}
\label{fig:profile_distributionsh}
\end{figure}

\begin{figure}[H]
\centering
        \includegraphics[width=0.8\textwidth]{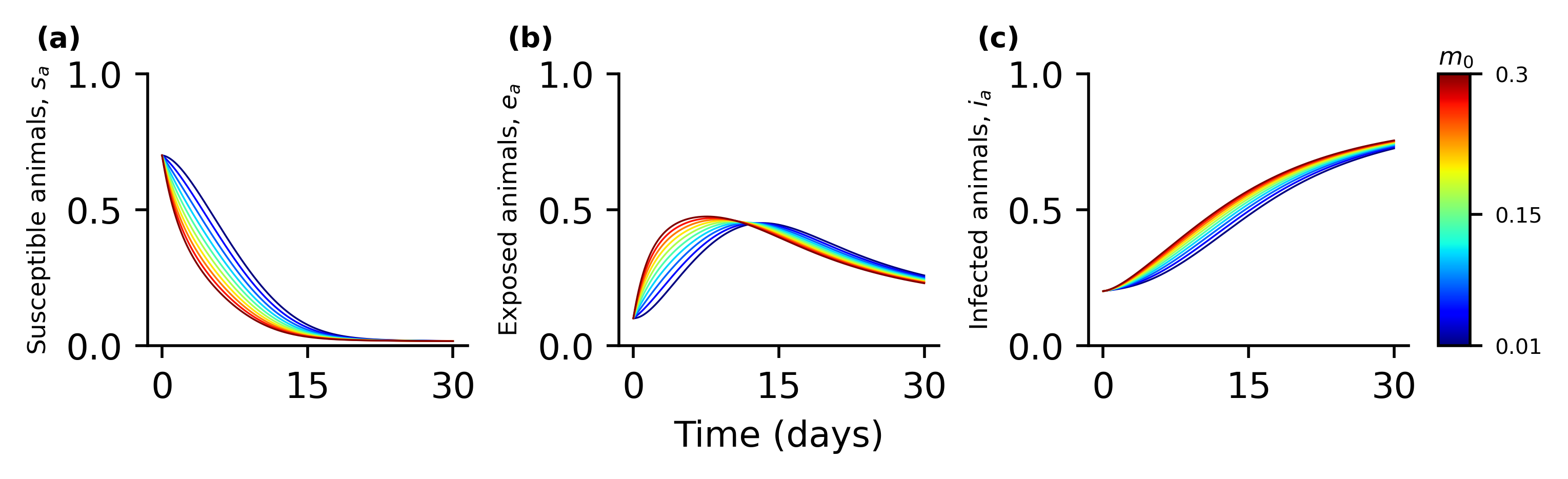}
        
\caption{ Profile distribution of the animal population  (a) Susceptible, (b) Exposed, and (c) Infected for varying initial parasitic egg populations in the contaminated environment $m_0$.
The susceptible population declines rapidly, while the exposed and infected populations increase. As $m_0$, decreases, the decline in the susceptible population takes longer and fewer individuals are exposed. A similar trend is observed in the infected animal population as $m_0$ decreases.}
\label{fig:profile_distributionsa}
\end{figure}

\begin{figure}[H]
\centering
        \includegraphics[width=0.9\textwidth]{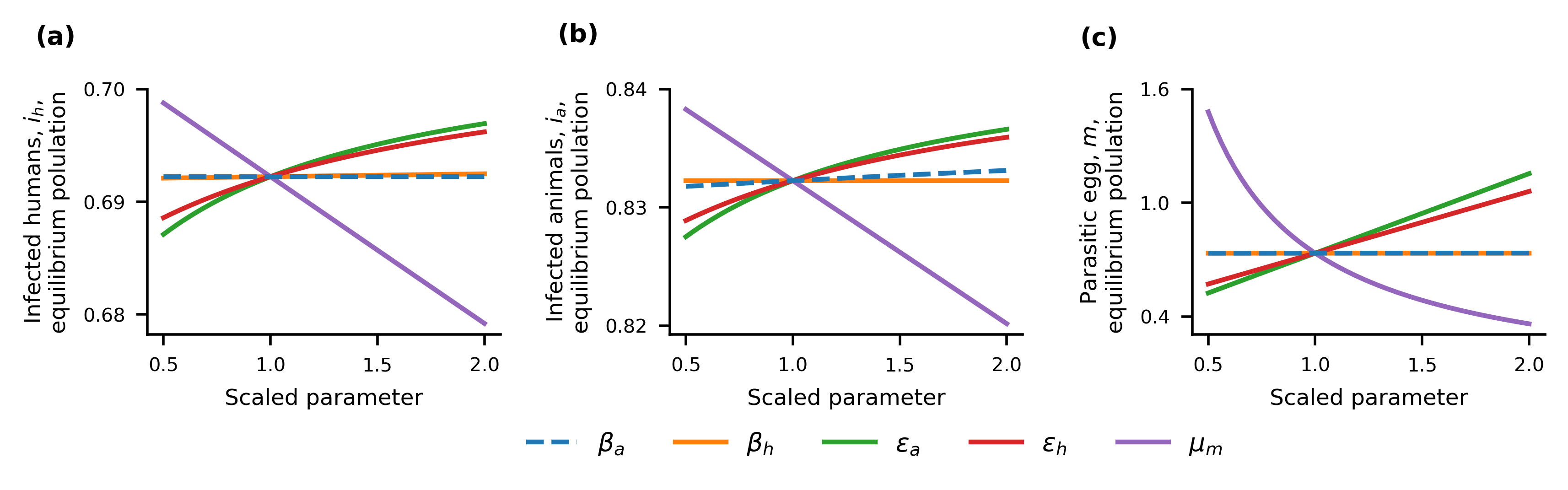}
        
\caption{Effect of varying  $\beta_a$,  $\beta_h$,  $\varepsilon_a$,  $\varepsilon_h$, and  $\mu_m$ on the equilibrium populations of (a) Infected humans, (b) Infected animals, and (c) Parasitic egg population. The parameters are scaled based on the estimated values in Table \ref{tab:parameters}, where a value of 1 represents the original estimate. Notably, in each curve, only the parameter of interest was altered, while all others remained at their baseline values. An increase in $\varepsilon_a$, and $\varepsilon_h$ leads to a higher number of infections, while a rise in $\mu_m$ results in a reduction. }
\label{fig:profile_distributions}
\end{figure}


We examined the transmission dynamics of STH using a susceptible-exposed-infected-susceptible (SEIS) compartment model. Our model does not include a recovery class since recovery from STH does not result in permanent immunity. The model encompasses three populations: humans, animals, and parasitic egg in a contaminated environment. Incorporating the animal population into the model is crucial because STH affects both humans and animals [16]. It is important to note that our data did not involve molecular techniques to determine if humans were infected with animal STH species or if animals were infected with human STH species.

Figures \ref{fig:profile_distributionsh} (a), and  \ref{fig:profile_distributionsa} (a) reveal a trend with an initial rapid decline of the susceptible population followed by a saturation point where the susceptible population no longer decreases significantly, occurring within the first $11$ days. This trend may result from a high ingestion rate, causing a significant portion of the susceptible population to become exposed quickly. The high initial parasite population in the model, leading to increased exposure, could also contribute to this trend. The 19\% prevalence of STH in the human population may further explain this pattern.
After the rapid decline, the susceptible population stabilizes, which could indicate that most of the susceptible individuals have either become exposed or infected or that the parasite density in the environment has decreased due to a high clearance rate, resulting in lower infection levels. The specific trend observed could also be attributed to the parameters used in our model, such as ingestion rate, shedding rate, and initial conditions.

From Figures \ref{fig:profile_distributionsh} (a)-(c), and  \ref{fig:profile_distributionsa} (a)-(c) we notice that as the initial parasite population decreases, there is a shift towards the right, indicating it takes longer days for most of the susceptible population to become infected. This suggests that a high initial population may explain the trend seen in Figures  \ref{fig:profile_distributionsh} (a), and  \ref{fig:profile_distributionsa} (a)

Figures \ref{fig:profile_distributions} (a)-(c) suggest that increasing the clearance rate and decreasing the shedding rate of parasitic eggs into the environment decreases the parasite population, consequently reducing the number of infected humans and animals. This underscores the importance of environmental sanitation and interventions that reduce environmental contamination in controlling STH infections. It also suggests that environmental factors are crucial in maintaining or reducing parasite burdens in populations. The pattern observed in  Figures \ref{fig:profile_distributions} (a)-(c) could imply the potential for cross-species transmission of STH between humans and animals, as a shared environment could facilitate this transmission. This highlights the need for further research to confirm cross-species transmission of the parasite, potentially involving molecular analysis to verify zoonotic transmission, as STH has zoonotic potential.
We also observe in Figure \ref{fig:profile_distributions} (c) that an increase in  $\beta_a$, and $\beta_h$ does not have a significant impact on the the parasite population $m$. This is due to the assumption that the number of parasitic eggs ingested is minimal in comparison to the overall parasitic egg population in the environment. A similar trend is observed in Figure \ref{fig:senanalysis} (c).

\section{Conclusion and Outlook}\label{section:conclusion}

In conclusion, the analysis of the STH model revealed several critical insights into the dynamics of infection spread and control. The model demonstrated that increasing ingestion rates in both human and animal populations has little impact on the number of infected individuals. This trend is evident in the PRCC analysis, where the number of infected individuals decreases over time. By the time equilibrium is reached, the PRCC indices are very small, indicating minimal influence. Furthermore, ingestion rates do not affect the parasitic egg load in the environment, as it is assumed that the number of ingested eggs is insignificant compared to the overall parasite population in the environment.  The model also clearly defines the importance of Environmental Control. The rate at which parasites are shed into the environment and the rate at which they are cleared play a crucial role in managing STH infections. This suggests that environmental interventions, such as improved sanitation and proper disposal of fecal waste of humans and animals, are vital for reducing infection rates in both human and animal populations. Cross-species transmission between animals and humans significantly impacts on the infection dynamics, especially in humans. This emphasizes the need for a One Health approach that
 considers the interconnectedness of human, animal, and environmental health in managing STH infections. The findings suggest that interventions should prioritize promoting personal hygiene, enhancing environmental sanitation, and controlling
 cross-species transmission. Overall, this model highlights the complexity of STH infection dynamics and the need for multi-faceted approaches to effectively manage and reduce the burden of these infections in both human and animal populations. 

In our succeeding study, we will focus on several strategies to mitigate the spread of STH infections. Model simulations indicate that both human and animal populations need protection against infections. Therefore, we recommend different control strategies which will be subject to further studies such as cost-effectiveness: 
\begin{itemize}
\item Anthelminthic Medications: Mass drug administration (MDA) to high-risk groups should be extended to animals under a One Health Approach. Including animals in MDA programs can help eliminate STH from both humans and animals by reducing the shedding of parasitic eggs into the environment. Previous studies suggest that medication alone may not eliminate STH  infections, necessitating alternative strategies \cite{bib40}.
\item  Health Education: Increasing public awareness of disease causes, transmission, and impacts is crucial for improving public health. STH infections are closely linked to poverty and low education levels \cite{bib41}, and educating the public can improve understanding and promote preventive measures. Given the evidence that animals contribute to the overall burden of intestinal helminth infections in humans, it is vital to educate the entire population, especially pet owners, on proper animal care during health campaigns. This can significantly reduce the ingestion rate of parasites.
\item  Water, Sanitation, and Hygiene(WASH): Enhancing WASH practices is crucial. Actions like wearing shoes and washing hands with soap before and after meals can reduce infection risks. Although sanitation has been the main priority, ensuring clean water consumption and promoting good hygiene can also help decrease STH prevalence. WASH initiatives should consider both humans and animals, which is why Alexandra et al. suggest that the "A" in WASH should stand for "Animal" \cite{bib42}.
\item  Use of Saprophytic Fungi: Applying saprophytic fungi to heavily contaminated soil can address STH infections by disrupting the parasite life cycle in the soil \cite{bib43}. This approach is crucial because STH eggs can survive in the environment for extended periods, up to 10 years under favorable conditions \cite{bib44}. This method can prevent parasitic eggs from developing into infectious stages, helping eliminate parasites from the environment and increasing the clearance rate from contaminated areas.   
\end{itemize}


\subsection*{Availability of data and materials}
 All data utilized in this analysis can be obtained by contacting the corresponding author.
\subsection*{Author Contributions} 
All authors were involved in the conception and design of the study, and the manuscript's writing. All authors have reviewed and approved the final version of the manuscript.
 \subsection*{Declaration of Competing Interest}
The authors confirm that they have no financial interests or personal relationships that could have influenced the research reported in this paper.

\newpage
\begin{appendices}
\section{Basic reproduction number}\label{app:reproduction number}
A key concept in epidemiology is the basic reproductive number, commonly denoted by $R_0$. Usually,  $R_0$ is defined as the average number of new cases reproduced in a wholly susceptible population when an infective individual is introduced into the population \cite{bib32}. It is the quantity that governs the transmission potentials of an infectious disease. 

Additionally, it helps in understanding the underlying dynamics and making predictions about the behavior of infectious diseases. When $R_0<1$, it means that the disease will die out in the population, while $R_0>1$,  implies the persistence of the disease in the population. $R_0$ is computed using the Next Generation Matrix approach by Van den Driessche and Watmough [32] at the disease-free equilibrium state, $E_0$. 

The Next Generation Matrix comprises two parts: The matrix of new infection $F$ and the transfer matrix
 $ V$. $\mathscr{ F}_i$ is the rate at which previously uninfected individuals enter compartment $\textit{i}$ whilst $\mathscr{ V}_i$ is the rate of transfer of individuals out of compartment $\textit{i}$ minus the rate of transfer into compartment $\textit{j}$.

$F$ has elements $ F_{ij}$ which is expressed as  $F_{ij}=\dfrac{\partial \mathscr{ F}_i(E_0)}{\partial x_j}$ and ${V}$  has elements $ V_{ij}$ which can also be expressed as $V_{ij}=\dfrac{\partial \mathscr{V}_i(E_0)}{\partial x_j}$.
 $R_0$ is the dominant eigenvalue or the spectral radius of the matrix, $FV^{-1}$ with the associated matrix $F$ and matrix $V$ as the Jacobian matrices of $F_{ij}$ and $V_{ij}$ respectively at the disease-free equilibrium state [32]. 

We derive $R_0$  using the Next Generation Matrix approach on the system (6) which is composed of the infected compartments. 
\begin{equation}
\begin{cases}
\dfrac{de_h}{dt}&=\frac{\beta_h m}{1+m}s_h-\left(b_h+{\rho }_h\right)e_h\\~\\
\dfrac{di_h}{dt}&={\rho }_he_h-\left(b_h+\gamma_h\right)i_h\\~\\
\dfrac{de_a}{dt}&=\frac{\beta_a m}{1+m}s_a-\left(b_a+{\rho }_a-d_ai_a\right)e_a\\~\\
\dfrac{di_a}{dt}&={\rho }_a e_a-\left(b_a+d_a+\gamma_a-d_ai_a\right)i_a\\~\\
\dfrac{dm}{dt}&={\varepsilon }_hi_h+{\varepsilon }_ai_a-\mu_mm
\end{cases}
\end{equation}
where $x_{i.j}=(x_1,x_2,x_3,x_4,x_5)=(e_h,i_h,e_a,i_a,m)$\\
From system (6):
$$ \mathscr{ F}=\left(\frac{\beta_h m}{1+ m}s_h,0,\frac{\beta_a m}{1+m}s_a,0,0\right)^T$$

$$\mathscr{ V}= \begin{bmatrix} 
(b_h+\rho_h)e_h\\
\rho_he_h-(b_h+\gamma_h)i_h\\
(b_a+\rho_a-d_a i_a)e_a\\
\rho_ae_a-\left(b_a+d_a+\gamma_a-d_ai_a\right)i_a\\
\varepsilon_hi_h+\varepsilon_ai_a-\mu_mm
\end{bmatrix}$$\\
The Jacobian matrices of $\mathscr{ F}$ and $\mathscr{ V}$ at the disease-free equilibrium $E_0$ are;
$$F=\begin{bmatrix}
0&0&0&0&\beta_h\\
0&0&0&0&0\\
0&0&0&0&\beta_a\\
0&0&0&0&0\\
0&0&0&0&0\\
\end{bmatrix}$$
$$V=\begin{bmatrix}
\theta_h&0&0&0&0\\
\rho_h&-\omega_h&0&0&0\\
0&0&\theta_a&0&0\\
0&0&\rho_a&-\omega_a&0\\
0&\varepsilon_h&0&\varepsilon_a&-\mu_m\\
\end{bmatrix}$$\\
where $\theta_h=b_h+\rho_h,\theta_a=b_a+\rho_a,\omega_h = b_h+\gamma_h,\omega_a=b_a+d_a+\gamma_a$
$$V^{-1}=\begin{bmatrix}
\frac{1}{\theta_h}&0&0&0&0\\
\frac{\rho_h}{\theta_h\omega_h}&-\frac{1}{\omega_h}&0&0&0\\
0&0&\frac{1}{\theta_a}&0&0\\
0&0&\frac{\rho_a}{\theta_a\omega_a}&-\frac{1}{\omega_a}&0\\
\frac{\varepsilon_h\rho_h}{\theta_h\omega_h\mu_m}&-\frac{\varepsilon_h}{\omega_h\mu_m}&\frac{\varepsilon_a\rho_a}{\theta_a\omega_a\mu_m}&-\frac{\varepsilon_a}{\omega_a\mu_m}&-\frac{1}{\mu_m}\\
\end{bmatrix}$$\\
$$FV^{-1}=\begin{bmatrix}
\frac{\varepsilon_h\beta_h\rho_h}{\theta_h\omega_h\mu_m}&-\frac{\varepsilon_h\beta_h}{\omega_h\mu_m}&\frac{\beta_h\varepsilon_a\rho_a}{\theta_a\omega_a\mu_m}&-\frac{\beta_h\varepsilon_a}{\omega_a\mu_m}&-\frac{\beta_h}{\mu_m}\\
0&0&0&0&0\\
\frac{\beta_a\varepsilon_h\rho_h}{\theta_h\omega_h\mu_m}&-\frac{\beta_a\varepsilon_h}{\omega_h\mu_m}&\frac{\beta_a\varepsilon_a\rho_a}{\theta_a\omega_a\mu_m}&-\frac{\beta_a\varepsilon_a}{\omega_a\mu_m}&-\frac{\beta_a}{\mu_m}\\
0&0&0&0&0\\
0&0&0&0&0\\
\end{bmatrix}$$\\
The basic reproduction number $R_0$ is the spectral radius of the next-generation matrix $ FV^{-1}$. Thus,
$$R_0=\frac{\beta_a\varepsilon_a\rho_a\theta_h\omega_h+\beta_h\varepsilon_h\rho_h\theta_a\omega_a}{\mu_m\omega_h\theta_h\omega_a\theta_a}$$\\
$$R_0=R_{0a}+R_{0h}$$\\
where\\$ R_{0a}=\frac{\beta_a\varepsilon_a\rho_a}{\mu_m\omega_a\theta_a}$\\
$R_{0h}=\frac{\beta_h\varepsilon_h\rho_h}{\mu_m\omega_h\theta_h}$
\section{Proof of Theorem \ref{thm:local stability}}\label{app:local stability}
We compute the Jacobian matrix of our model  at the disease-free equilibrium state $E_0$ as follows:
$$J_{E_0}=\scriptsize\begin{bmatrix}
-b_h&0&\gamma_h&0&0&0&-\beta_h\\
0&-(b_h+\rho_h)&0&0&0&0&\beta_h\\
0&\rho_h&-(b_h+\gamma_h)&0&0&0&0\\
0&0&0&-b_a&0&d_a+\gamma_a&-\beta_a\\
0&0&0&0&-(b_a+\rho_a)&0&\beta_a\\
0&0&0&0&\rho_a&-(b_a+d_a+\gamma_a)&0\\
0&0&\varepsilon_h&0&0&\varepsilon_a&-\mu_m\\
\end{bmatrix}$$\\
$$J_{E_0}=\scriptsize\begin{bmatrix}
-b_h&0&\gamma_h&0&0&0&-\beta_h\\
0&-\theta_h&0&0&0&0&\beta_h\\
0&\rho_h&-\omega_h&0&0&0&0\\
0&0&0&-b_a&0&\psi_a&-\beta_a\\
0&0&0&0&-\theta_a&0&\beta_a\\
0&0&0&0&\rho_a&-\omega_a&0\\
0&0&\varepsilon_h&0&0&\varepsilon_a&-\mu_m\\
\end{bmatrix}$$
where $\psi_a=d_a+\gamma_a$.

The first and second eigenvalues of the Jacobian matrix, $J_{E0}$ are $\lambda_1=-b_h,\lambda_2=-b_a$ which are strictly negative. The remaining five can be obtained by considering the submatrix:\\
$$\scriptsize\begin{bmatrix}
-\theta_h&0&0&0&\beta_h\\
\rho_h&-\omega_h&0&0&0\\
0&0&-\theta_a&0&\beta_a\\
0&0&\rho_a&-\omega_a&0\\
0&\varepsilon_h&0&\varepsilon_a&-\mu_m\\
\end{bmatrix}$$\\
The characteristic equation of the submatrix is thus given by\\
$\lambda^5+A_4\lambda^4+A_3\lambda^3+A_2\lambda^2+A_1\lambda+A_0=0$\\
where

$A_4=\mu_m+\omega_a+\omega_h+\theta_a+\theta_h$\\

$A_3=\mu_m(\omega_a+\theta_a)+\mu_m(\omega_h+\theta_h)+(\omega_a+\theta_a)(\omega_h+\theta_h)+\omega_a\theta_a+\omega_h\theta_h$\\

$A_2=\mu_m(\omega_h+\theta_h)(\omega_a+\theta_a)+\mu_m\omega_a\theta_a(1-R_{0a})+\mu_m\omega_h\theta_h(1-R_{0h})+\omega_a\theta_a(\omega_h+\theta_h)+\omega_h\theta_h(\omega_a+\theta_a)$\\

$A_1=\mu_m\omega_a\omega_h\left[(\theta_a(1-R_{0a})+\theta_h(1-R_{0h})\right]+\mu_m\theta_a\theta_h\left[(\omega_a(1-R_{0a})+\omega_h(1-R_{0h})\right]+\omega_a\omega_h\theta_a\theta_h$\\

$A_0=\mu_m\omega_a\omega_h\theta_a\theta_h\left(1-R_0\right)$\\

Applying the Routh-Hurwitz criteria \cite{bib33}, the roots of the characteristic equation of the submatrix have negative
real parts if the following inequalities are satisfied: $(i) A_4>0, A_3>0, A_2>0, A_1>0, A_0>0$.
$(ii) A_4A_3-A_2>0$ $(iii)A_2(A_4A_3-A_2)-A_1A_4^2>0$ $(iv) 2A_4A_1-A_3A_4^2+A_3A_2-A_0>0$ The inequalities $ ii,\; iii \; and \; iv$ are satisfied provided $A_i>0 (i=0,1,2,3,4)$  whenever $R_0<1$. Hence, according to the Routh-Hurwitz criteria, the submatrix has negative real parts whenever $R_0<1$.\\
Therefore, a local asymptotical stability exists for the disease-free equilibrium state, $E_0$  whenever $R_0<1$.


\section{Proof of Theorem \ref{thm:global stability}}\label{app:global stability}

From the system of equation \eqref{fracequations}, we have

$$F(X,Y)=\begin{bmatrix}
b_h+{\gamma }_h i_h-\left({b_h+\frac{\beta_h m}{1+m }}\right)s_h\\~\\
b_a+{\gamma }_a i_a-\left({b_a+\frac{\beta_a m}{1+m }-d_ai_a}\right)s_a
\end{bmatrix}$$

$$G(X,Y)=\begin{bmatrix}
\frac{\beta_h m}{1+m}s_h-\left(b_h+{\rho }_h\right)e_h\\~\\
{\rho }_he_h-\left(b_h+\gamma_h\right)i_h\\~\\
\frac{\beta_a m}{1+m}s_a-\left(b_a+{\rho }_a-d_ai_a\right)e_a\\~\\
{\rho }_a e_a-\left(b_a+d_a+\gamma_a-d_ai_a\right)i_a\\~\\
{\varepsilon }_hi_h+{\varepsilon }_ai_a-{\mu_m }m
\end{bmatrix}$$\\

\begin{equation}
\left.\frac{dX}{dt}\right|_{Y=0}=\begin{pmatrix}
b_h-b_hs_h \\ b_a-b_as_a\\

\end{pmatrix}
\end{equation}

$$\frac{ds_h}{dt}=b_h-b_hs_h$$
$$s_h(t)=1+\left(s_h(0)-1\right)e^{-b_ht}$$
As $t  \to \infty, s_h(t) \to 1$\\ also,
$$\frac{ds_a}{dt}=b_a-b_as_a$$
$$s_a(t)=1+\left(s_a(0)-1\right)e^{-b_at}$$
As $t  \to \infty, s_a(t) \to 1$
implying global convergence of solution of $(8)$  in $\Omega$. Thus, condition  $(D_1)$ is satisfied.\\

$$A=\scriptsize\begin{bmatrix}
-(b_h+\rho_h)&0&0&0&\beta_h\\
\rho_h&-({b_h+\gamma_h})&0&0&0\\
0&0&-(b_a+\rho_a)&0&\beta_a\\
0&0&\rho_a&-({b_a+d_a+\gamma_a})&0\\
0&\varepsilon_h&0&\varepsilon_a&-{\mu_m }\\
\end{bmatrix}$$\\

$$AY=\begin{bmatrix}
-(b_h+{\rho }_h)e_h+\beta_hm\\~\\
{\rho }_he_h-\left(b_h+\gamma_h\right)i_h\\~\\
-(b_a+{\rho }_a)e_a+\beta_am\\~\\
{\rho }_a e_a-\left(b_a+d_a+\gamma_a\right)i_a\\~\\
{\varepsilon }_hi_h+{\varepsilon }_ai_a-{\mu_m }m
\end{bmatrix}$$

\begin{equation}
\widehat{ G}(X,Y)= \begin{bmatrix} \widehat {G}_1(X,Y)\\ \widehat {G}_2(X,Y)\\ \widehat {G}_3(X,Y)\\ \widehat {G}_4(X,Y)\\ \widehat {G}_5(X,Y)\\  \end{bmatrix}=\begin{bmatrix}
\left(1-\frac{s_h}{1+m}\right)\beta_hm\\
0\\
\left(1-\frac{s_a}{1+m}\right)\beta_am-d_ae_ai_a\\
-d_ai_a^2\\0\\

\end{bmatrix}
\end{equation}
From $(9)$, $\widehat{ G}(X,Y)\geq0$ if and only if $d_a=0$. Thus, $E_0$ may not be globally asymptotically stable for some parameter values.

\section{Bifurcation analysis}\label{app:bifurcation}

The center manifold theory introduced by Castillo-Chavez and Song \cite{bib35} establishes the endemic equilibrium's stability. We employ the center manifold theory to examine the specific type of bifurcation exhibited by equation (4). The existence of forward bifurcation means that the disease-free and endemic equilibrium states are locally asymptotically stable if $R_0<1$ and $R_0>1$  respectively. Using the center manifold theory [35], a forward bifurcation occurs at bifurcation parameter $\phi=0$, if the coefficient constants $p<0$  and $ q>0 $, otherwise there is a backward bifurcation.
To apply the center manifold theory, a variable transformation is carried out on the normalized helminth model \eqref{fracequations}.

Let  $x_1=s_h,x_2=e_h,x_3=i_h,x_4=s_a,x_5=e_a,x_6=i_a,x_7=m$. Using the vector notation $X=(x_1,x_2,x_3,x_4,x_5,x_6,x_7)^T$ then \eqref{fracequations} can be written in the form 
$\dfrac{dX}{dt}=F(x)$ with $F=(f_1,f_2,f_3,f_4,f_5,f_6,f_7)$.

Choosing $\beta^*=\beta_h$  as the bifurcation parameter and solving for $p$ and $q$ at $R_0=1$ we have:
$$\beta^*=\frac{\mu_m\omega_h\theta_h\omega_a\theta_a-\beta_a\varepsilon_a\rho_a\omega_h\theta_h}{\varepsilon_h\rho_h\omega_a\theta_a}$$
with 

\begin{equation}
\begin{cases}
f_1&=b_h+{\gamma }_h x_3-\left({b_h+\dfrac{\beta^* x_7}{1+x_7 }}\right)x_1\\~\\
f_2&=\dfrac{\beta^* x_7}{1+x_7}x_1-\left(b_h+{\rho }_h\right)x_2\\~\\
f_3&={\rho }_hx_2-\left(b_h+\gamma_h\right)x_3\\~\\
f_4&=b_a+{\gamma }_a x_6-\left({b_a+\dfrac{\beta_a x_7}{1+x_7 }-d_ax_6}\right)x_4\\~\\
f_5&=\dfrac{\beta_a x_7}{1+x_7}x_4-\left(b_a+{\rho }_a-d_ax_6\right)x_5\\~\\
f_6&={\rho }_a x_5-\left(b_a+d_a+\gamma_a-d_ax_6\right)x_6\\~\\
f_7&={\varepsilon }_hx_3+{\varepsilon }_ax_6-{\mu_m }x_7
\end{cases}
\end{equation}

The Jacobian matrix of our model equation  at the disease-free equilibrium state $E_0$  is given as follows:
\begin{equation}\label{jcob}
J_{E_0}=\scriptsize\begin{bmatrix}
-b_h&0&\gamma_h&0&0&0&-\beta_h\\
0&-(b_h+\rho_h)&0&0&0&0&\beta_h\\
0&\rho_h&-({b_h+\gamma_h})&0&0&0&0\\
0&0&0&-b_a&0&d_a+\gamma_a&-\beta_a\\
0&0&0&0&-(b_a+\rho_a)&0&\beta_a\\
0&0&0&0&\rho_a&-({b_a+d_a+\gamma_a})&0\\
0&0&\varepsilon_h&0&0&\varepsilon_a&-\mu_m\\
\end{bmatrix}
\end{equation}
The Jacobian matrix \eqref{jcob} has been proved using the Routh-Hurwitz criterion to have a simple zero eigenvalue and 
negative eigenvalues. The right and left eigenvectors, $w_i$ and $v_j$, associated with the 
Jacobian matrix \eqref{jcob} at $ R_0 = 1$ are given by $w=(w_1,w_2,w_3,w_4,w_5,w_6,w_7)$ and $v=(v_1,v_2,v_3,v_4,v_5,v_6,v_7),$ 
where
\begin{align*}
w_1=&\dfrac{1}{b_h}\left[\dfrac{\gamma_h\rho_h}{b_h+\gamma_h}-(b_h+\rho_h)\right]w_2=\frac{1}{b_h}\left[\dfrac{\psi_h\rho_h}{\omega_h}-\phi_h\right]w_2\\
w_2=& w_2>0 \; \text{is}\; \text{free}\\
w_3=&\dfrac{\rho_hw_2}{b_h+\gamma_h}=\dfrac{\rho_hw_2}{\omega_h}\\
w_4=&\dfrac{\beta_a(b_h+\rho_h)}{b_a\beta^*}\left[\dfrac{\rho_a}{(b_a+\rho_a)(b_a+d_a+\gamma_a)}-1\right]w_2=\dfrac{\beta_a\phi_h}{b_a\beta^*}\left[\dfrac{\rho_a}{\phi_a\omega_a}-1\right]w_2\\
w_5=&\dfrac{\beta_a(b_h+\rho_h)w_2}{\beta^*(b_a+\rho_a)}=\dfrac{\beta_a\phi_hw_2}{\beta^*\phi_a}\\
w_6=&\dfrac{\beta_a\rho_a(b_h+\rho_h)w_2}{\beta^*(b_a+\rho_a)(b_a+d_a+\gamma_a)}=\dfrac{\beta_a\rho_a\phi_hw_2}{\beta^*\phi_a\omega_a}\\
w_7=&\frac{(b_h+\rho_h)w_2}{\beta^*}=\dfrac{\phi_hw_2}{\beta^*}\\
v_1=&v_4=0\\
v_2=& \dfrac{\rho_h}{b_h+\rho_h}v_3=\dfrac{\rho_h}{\phi_h}v_3\\
v_3=& v_3>0 \;\text{is} \;\text{free}\\
v_5= &\dfrac{1}{\beta_a}\left[\dfrac{\beta^*\rho_h}{b_h+\rho_h}-\dfrac{\mu_m(b_h+\gamma_h)}{\varepsilon_h}\right]v_3=\frac{1}{\beta_a}\left[\frac{\beta^*\rho_h}{\phi_h}-\frac{\mu_m\omega_h}{\varepsilon_h}\right]v_3\\
v_6= & \dfrac{(b_a+\rho_a)}{\rho_a\beta_a}\left[\dfrac{\beta^*\rho_h}{b_h+\rho_h}-\frac{\mu_m(b_h+\gamma_h)}{\varepsilon_h}\right]v_3=\frac{\phi_a}{\rho_a\beta_a}\left[\frac{\beta^*\rho_h}{\phi_h}-\frac{\mu_m\omega_h}{\varepsilon_h}\right]v_3\\
v_7= &\dfrac{b_h+\gamma_h}{\varepsilon_h}v_3=\frac{\omega_h}{\varepsilon_h}v_3.
\end{align*}
On the other hand,
\begin{align*}
\dfrac{\partial^2f_1}{\partial x_1\partial x_7}=&-\beta^*\\
\dfrac{\partial^2f_2}{\partial x_1\partial x_7}=&\beta^*\\
\dfrac{\partial^2f_4}{\partial x_4\partial x_7}=&-\beta_a\\
\dfrac{\partial^2f_4}{\partial x_4\partial x_6}= & d_a\\
\frac{\partial^2f_5}{\partial x_5\partial x_6}=d_a\\
\frac{\partial^2f_5}{\partial x_4\partial x_7}=&\beta_a\\
\dfrac{\partial^2f_6}{\partial x_6^2}= &2d_a\\
\dfrac{\partial^2f_1}{\partial x_7\partial \beta^*}= &-1\\
\dfrac{\partial^2f_2}{\partial x_7\partial \beta^*}= &1
\end{align*}
Now,
$$p=\sum_{k,i,j=1}^7v_kw_iw_j\dfrac{\partial^2f_k}{\partial x_i\partial x_j}(E_0,\beta^*)$$
implies
\begin{align*}
p=&v_2\left(w_1w_7\frac{\partial^2f_2}{\partial x_1\partial x_7}(E_0,\beta^*)\right)\\&+v_5\left(w_5w_6\frac{\partial^2f_5}{\partial x_5\partial x_6}(E_0,\beta^*)+w_4w_7\frac{\partial^2f_5}{\partial x_4\partial x_7}(E_0,\beta^*)\right)+v_6w_6^2\frac{\partial^2f_6}{\partial x_6^2}(E_0,\beta^*)
\end{align*}
By substitution,
\begin{align*}
p=&\rho_hv_3w_2^2\left[\frac{1}{b_h}\left(\frac{\gamma_h\rho_h}{\omega_h}-\phi_h\right)\right]\\&+\frac{2\beta_a\rho_a\phi_hd_av_3w_2^2}{\beta^{*2}\phi_a\omega_a^2\varepsilon_h}\left[\beta_h\rho_h\varepsilon_h-\mu_m\omega_h\phi_h\right]\\&+\frac{\beta_a\phi_hw_2^2v_3}{b_a\beta^{*2}\phi_a^2\omega_a\varepsilon_h}\left[(\beta_h\rho_h\varepsilon_h-\mu_m\phi_h\omega_h)(b_a\rho_ad_a+\phi_a(\rho_a-\phi_a\omega_a))\right]
\end{align*}
Similarly,
$$q=\sum_{k,i,=1}^7v_kw_i\dfrac{\partial^2f_k}{\partial x_i\partial \beta^*}(E_0,\beta^*)=v_2w_7\dfrac{\partial^2f_2}{\partial x_7\partial \beta^*}(E_0,\beta^*)$$
implying that
$$q=\dfrac{\rho_hv_3w_2}{\beta^*}>0.$$

The coefficient $q$ is positive. By Castillo-Chavez and Song \cite{bib35}, coefficient $p$ decides the local dynamics of endemic equilibrium. Therefore, if $p<0$ then system (4) will exhibit forward bifurcation; if $p>0$, it undergoes backward bifurcation.
\end{appendices}


\bibliographystyle{plain}
\bibliography{paper}

\begin{thebibliography}{10}

\bibitem{bib41}
S.~Alexandra, P.~Gokila, M.~Malathi, and et~al.
\newblock A one health approach to defining animal and human helminth exposure risks in a tribal village in southern india.
\newblock {\em Am. J. Trop. Med. Hyg.}, 105(1):196–203, 2021.

\bibitem{bib42}
S.~Alexandra, P.Gokila, M.~Malathi, P.~Chinnaduraipandi, R.Anuradha, P.K.Saravanakumar, W.Honorine, L.~W.Judd, E.~H.Katherine, and S.~R. A.Sitara.
\newblock A one health approach to defining animal and human helminth exposure risks in a tribal village in southern india.
\newblock {\em Trop. Med. Hyg}, 105(1):196–203, 2021.

\bibitem{bib20}
R.~Anderson, S.~Farrell, H.~Turner, J.~Walson, C.A. Donnelly, and J.~Truscott.
\newblock Assessing the interruption of the transmission of human helminths with mass drug administration alone: optimizing the design of cluster randomized trials.
\newblock {\em Parasit Vectors}, 10, 2017.

\bibitem{bib30}
R.M. Anderson and R.M. May.
\newblock Regulation and stability of host-parasite population interactions: I. regulatory processes.
\newblock {\em The Journal of Animal Ecology}, 47(1):219–247, 1978.

\bibitem{bib31}
R.M. Anderson and R.M. May.
\newblock Directly transmitted infections diseases: control by vaccination.
\newblock {\em Science}, 215(4536):1053–1060, 1982.

\bibitem{bib17}
R.M. Anderson and R.M. May.
\newblock Population biology of infectious diseases, report of the dahlem workshop on population biology of infectious disease agents berlin 1982, march 14-19.
\newblock In {\em Population Biology of Infectious Diseases}, Heidelberg, 1982. Springer.

\bibitem{bib29}
R.M. Anderson and R.M. May.
\newblock Population dynamics of human helminth infections: control by chemotherapy.
\newblock {\em Nature}, 297(5867):557–563, 1982.

\bibitem{bib27}
R.M. Anderson, J.E. Truscott, R.L. Pullan, S.J. Brooker, and T.D. Hollingsworth.
\newblock How effective is school-based deworming for the community-wide control of soil-transmitted helminths?
\newblock {\em PLoS Negl Trop Dis}, 7, 2013.

\bibitem{bib12}
T.J. Anderson.
\newblock Ascaris infections in humans from north america: molecular evidence for cross-infection.
\newblock {\em Parasitology}, 110, 1995.

\bibitem{bib18}
G.L. Aristide, G.M. Gasper, L.L. Livingstone, and K.~Dmitry.
\newblock Mathematical model for optimal control of soil-transmitted helminth infection.
\newblock {\em Computational and Mathematical Methods in Medicine}, 259, 2020.

\bibitem{bib11}
N.~Arizono, Y.~Yoshimura, N.~Tohzaka, and et~al.
\newblock Ascariasis in japan: is pig-derived ascaris infecting humans?
\newblock {\em Jpn J Infect Dis}, 63, 2010.

\bibitem{bib49}
Philippine~Statistics Authority.
\newblock Life expectancy of women at birth, 2021.

\bibitem{Baron1996}
S.~Baron.
\newblock {\em Medical Microbiology}.
\newblock Galveston (TX): University of Texas Medical Branch at Galveston, 1996.

\bibitem{bib9}
Z.A. Bhutta, J.~Sommerfeld, Z.S. Lassi, R.A. Salam, and J.K. Das.
\newblock Global burden, distribution, and interventions for infectious diseases of poverty.
\newblock {\em Infectious Diseases of Poverty}, 3(1):1–7, 2014.

\bibitem{bib5}
D.A. Bundy and N.R.~De Silva.
\newblock Can we deworm this wormy world?
\newblock {\em Br Med Bull}, 54, 1998.

\bibitem{bib35}
C.~Castillo-Chavez and B.~Song.
\newblock Dynamical models of tuberculosis and their applications.
\newblock {\em Mathematical Biosciences and Engineering}, 1(2):361–404, 2004.

\bibitem{bib22}
M.~Chan, M.~Bradley, and D.~Bundy.
\newblock Transmission patterns and the epidemiology of hookworm infection.
\newblock {\em Int J Epidemiol}, 26:1392–400, 1997.

\bibitem{bib34}
C.C. Chavez, Z.~Feng, and W.~Huang.
\newblock On the computation of r0 and its role on global stability.
\newblock {\em The IMA Volumes in Mathematics and its Applications}, 262(125):31–65, 2002.

\bibitem{bib1}
L.E. Coffeng, S.~Vaz Nery, D.J. Gray, R.~Bakker, S.J. de~Vlas, and A.C. Clements.
\newblock Predicted short and long-term impact of deworming and water, hygiene, and sanitation on transmission of soil-transmitted helminths.
\newblock {\em PLoS Negl Trop Dis}, 12, 2018.

\bibitem{bib32}
P.~Van den Driessche and J.~Watmough.
\newblock Reproduction numbers and sub-threshold endemic equilibria for compartmental models of disease transmission.
\newblock {\em Mathematical Biosciences}, 180:29--48, 2002.

\bibitem{bib44}
D.Jacobs, M.~Fox, L.~Gibbons, and C.~Hermosilla.
\newblock {\em Principles of Veterinary Parasitology}.
\newblock Wiley-Blackwell, 2015.

\bibitem{Epe2009}
C.~Epe.
\newblock Intestinal nematodes: Biology and control.
\newblock {\em Veterinary Clinics of North America: Small Animal Practice}, 39(6):1091--1107, 2009.

\bibitem{CDC}
US~Center's for Disease~Control and Prevention.
\newblock About zoonotic hookworm.
\newblock \url{https://www.cdc.gov/zoonotic-hookworm/about/}, 2024.

\bibitem{bib16}
S.~George.
\newblock The role of domestic animals in the transmission of soil-transmitted helminth infectious in humans, 2016.

\bibitem{bib43}
J.~Angel Hernandez, C.F.Cazapal-Monteiro, J.~Sanchıs, R.~Sanchez-Andrade, A.~Paz-Silva, and M.Sol Arias.
\newblock Potential usefulness of filamentous fungi to prevent zoonotic soil-transmitted helminths.
\newblock {\em Vector-Borne and Zoonotic Diseases}, 18(12), 2018.

\bibitem{New2024}
Jr. John C.~New, William~J. Kelch, Jennifer~M. Hutchison, Mo~D. Salman, Mike King, Janet~M. Scarlett, and Philip H.~Kass and.
\newblock Birth and death rate estimates of cats and dogs in u.s. households and related factors.
\newblock {\em Journal of Applied Animal Welfare Science}, 7(4):229--241, 2004.

\bibitem{bib4}
P.M. Jourdan, P.H.L. Lamberton, A.~Fenwick, and D.G. Addiss.
\newblock Soil-transmitted helminth infections.
\newblock {\em Lancet}, 391(10117):252–65, 2018.

\bibitem{bib13}
D.~Kaya, M.~Yoshikawa, T.~Nakatani, and et~al.
\newblock Ancylostoma ceylanicum hookworm infection in japanese traveler who presented chronic diarrhea after return from lao people's democratic republic.
\newblock {\em Parasitol Int}, 65(6):737--740, 2016.

\bibitem{bib15}
Y.~Liu, G.~Zheng, M.~Alsarakibi, and et~al.
\newblock The zoonotic risk of ancylostoma ceylanicum isolated from stray dogs and cats in guangzhou, south china.
\newblock {\em Biomed Res Int}, 2014.

\bibitem{bib6}
N.C. Lo, D.G. Addiss, P.J. Hotez, C.H. King, J.R. Stothard, D.S. Evans, and et~al.
\newblock A call to strengthen the global strategy against schistosomiasis and soil-transmitted helminthiasis: the time is now.
\newblock {\em The Lancet Infectious Diseases}, 17(2):e64–e69, 2017.

\bibitem{bib33}
R.M. May.
\newblock {\em Stability and Complexity in Model Ecosystems}.
\newblock Princeton University Press, New Jersey, 2001.

\bibitem{bib40}
V.~Mehraj, J.~Hatcher, S.~Akhtar, and et~al.
\newblock Prevalence and factors associated with intestinal parasitic infection among children in an urban slum of karachi.
\newblock {\em PLoS One}, 3, 2008.

\bibitem{bib7}
C.J.L. Murray and et~al.
\newblock Global, regional, and national disability-adjusted life-years (dalys) for 315 diseases and injuries and healthy life expectancy (hale), 1990–2015: a systematic analysis for the global burden of disease study 2015.
\newblock {\em Lancet}, 388(10053):1603–58, 2016.

\bibitem{bib19}
U.T. Mustapha and E.~Hincal.
\newblock An optimal control of hookworm transmissions model with differential infectivity.
\newblock {\em Physica A}, 2019.

\bibitem{bib48}
United Nations~Department of~Economics and Social Affairs:~Population Division.
\newblock World population proespects 2019: Highlights, 2019.

\bibitem{bib28}
C.~Okoyo, G.~Medley, C.~Mwandawiro, and N.~Onyango.
\newblock Modeling the interruption of the transmission of soil-transmitted helminths infections in kenya: Modeling deworming, water, and sanitation impacts.
\newblock {\em Front. Public Health}, 9, 2021.

\bibitem{bib3}
World~Health Organization.
\newblock Investing to overcome the global impact of neglected tropical diseases: Third who report on neglected tropical diseases (2015), 2015.

\bibitem{bib10}
World~Health Organization.
\newblock Soil-transmitted helminth infections.
\newblock \url{https://www.who.int/news-room/fact-sheets/detail/soil-transmitted-helminth-infections/}, 2021.

\bibitem{bib47}
V.G.V. Paller, V.Y.~Belizario Jr., R.C. Ancog, and et~al.
\newblock Socio-economic risk factors for intestinal helminthiases in selected endemic communities in mindanao, the philippines: a cross-sectional study.
\newblock {\em BMC Infectious Diseases}, 24:1012, 2024.

\bibitem{bib2}
R.L. Pullan, J.L. Smith, R.~Jasrasaria, and S.J. Brooker.
\newblock Global numbers of infection and disease burden of soil-transmitted helminth infections in 2010.
\newblock {\em Parasit Vectors}, 7, 2014.

\bibitem{bib14}
R.J. Traub, I.D. Robertson, P.~Irwin, and et~al.
\newblock The role of dogs in transmission of gastrointestinal parasites in a remote tea-growing community in northeastern india.
\newblock {\em Am J Trop Med Hyg}, 67, 2002.

\bibitem{bib23}
J.~Truscott, T.D. Hollingsworth, and R.~Anderson.
\newblock Modeling the interruption of the transmission of soil-transmitted helminths by repeated mass chemotherapy of school-age children.
\newblock {\em PLoS Negl Trop Dis}, 8, 2014.

\bibitem{bib25}
J.~Truscott, H.~Turner, S.~Farrell, and R.~Anderson.
\newblock Soil-transmitted helminths: mathematical models of transmission, the impact of mass drug administration and transmission elimination criteria.
\newblock {\em Adv Parasitol}, 94, 2016.

\bibitem{bib24}
J.E. Truscott, T.D. Hollingsworth, S.J. Brooker, and R.M. Anderson.
\newblock Can chemotherapy alone eliminate the transmission of soil-transmitted helminths?
\newblock {\em Parasit Vectors}, 7, 2014.

\bibitem{bib8}
T.~Vos and et~al.
\newblock Global burden of 369 diseases and injuries in 204 countries and territories, 1990–2019: a systematic analysis for the global burden of disease study 2019.
\newblock {\em Lancet}, 396, 2020.

\bibitem{bib21}
K.H. Ásbjörnsdóttir, S.S.R. Ajjampur, R.M. Anderson, R.~Bailey, I.~Gardiner, K.E. Halliday, and et~al.
\newblock Assessing the feasibility of interrupting the transmission of soil-transmitted helminths through mass drug administration: the deworm3 cluster randomized trial protocol.
\newblock {\em PLoS Negl Trop Dis}, 12, 2018.

\end{thebibliography}

\end{document}